\newtheorem{thm}{Theorem}[section]
\newtheorem{cor}[thm]{Corollary}
\newtheorem{pro}[thm]{Proposition}
\begin{document}
\begin{frontmatter}

\title{A Bidirectional DeepParticle Method for Efficiently Solving Low-dimensional Transport Map Problems}

\author[hku]{Tan Zhang}
\ead{thta@connect.hku.hk}
\author[uoc]{Zhongjian Wang}
\ead{zhongjian.wang@ntu.edu.sg}
\author[uci]{Jack Xin}
\ead{jxin@math.uci.edu}
\author[hku,mil]{Zhiwen Zhang\corref{cor1}}
\ead{zhangzw@hku.hk}
\address[hku]{Department of Mathematics, The University of Hong Kong, Pokfulam Road, Hong Kong SAR, China.}
\address[uoc]{
Division of Mathematical Sciences, 
Nanyang Technological University, 
21 Nanyang Link, 637371, Singapore.}
\address[uci]{Department of Mathematics, University of California, Irvine, CA 92697, USA.}
\address[mil]{Materials Innovation Institute for Life Sciences and Energy (MILES), HKU-SIRI, Shenzhen, P.\,R. China.}
\cortext[cor1]{Corresponding author}

\begin{abstract}
\noindent This paper aims to efficiently compute transport maps between probability distributions arising from particle representation of bio-physical problems.
We develop a bidirectional DeepParticle (BDP) method to learn and generate solutions under varying physical parameters. Solutions are approximated as empirical measures of particles that adaptively align with the high-gradient regions. The core idea of the BDP method is to learn both forward and reverse mappings (between the uniform and a non-trivial target distribution) by minimizing the discrete \textit{2-Wasserstein distance} (W2) and optimizing the  transition map therein 
by a \textit{mini-batch} technique.
We present numerical results demonstrating the effectiveness of the BDP method for learning and generating solutions to Keller–Segel chemotaxis systems in the presence of laminar flows and Kolmogorov flows with chaotic streamlines in three space dimensions. The BDP outperforms two recent representative single-step flow matching and diffusion models (rectified flow and shortcut diffusion models) in the generative AI literature. However when the target distribution is high-dimensional (4 and above), e.g. a mixture of two Gaussians, the single-step diffusion models scale better in dimensions than BDP in terms of W2-accuracy. 

\end{abstract}
\begin{keyword}
Particle method; optimal transport; 
bidirectional mappings; 
deep neural networks; Keller–Segel system; 
one-step generation.

\end{keyword}

\end{frontmatter}

\section{Introduction}\label{sec:intro}
\noindent The evaluation of discrepancies between probability distributions represents a fundamental challenge in machine learning. For instance, generative models such as generative adversarial networks (GANs) and variational autoencoders (VAEs) \cite{GAN, kingma2013auto, dinh2017density} seek to transform data points into latent codes that conform to a basic (Gaussian) distribution, enabling the generation and manipulation of data. Representation learning is based on the premise that if a sufficiently smooth function can map a structured data distribution to a simple distribution, it is likely to carry meaningful semantic interpretations, which are advantageous for various downstream learning tasks. Conversely, domain transfer methods identify mappings to shift points between two distinct, empirically observed data distributions, targeting tasks such as image-to-image translation, style transfer, and domain adaptation \cite{zhu2017unpaired, courty2016optimal, trigila2016data, peyre2019computational}. These tasks can be formulated as finding a transport map between the two distributions:

\textit{Given two empirical samples/observations $X_0, X_1$ with $X_0 \sim \pi_0, X_1\sim \pi_1$ on the metric space $X$ and $Y$, find a transport map $f: X \to Y$, such that $f(X_0) \sim \pi_1$ when $X_0 \sim \pi_0$.} 

In recent years, flow models utilizing neural ordinary differential equations (ODEs) and score-based diffusion models with stochastic differential equations (SDEs) \cite{chen2018neural, papamakarios2021normalizing, song2020score} have been employed to address many problems in this domain. A numerical ODE/SDE solver is trained and used to simulate the inference process. These models typically introduce a virtual time
$t$ into the transformation between the two distributions, discretizing this virtual time to facilitate training and generation. This discrete process requires multiple calls to the neural network output, increasing the time needed for inference. Although these methods produce high-quality samples, they necessitate an iterative inference procedure, often involving dozens to hundreds of forward passes through the neural network, resulting in slow and costly generation.
To expedite the sampling process, it is essential to reduce the number of discrete steps required, allowing the entire sampling to be completed in just a few or even a single step. In doing so, these models sacrifice some performance of the original multi-step diffusion methods in favor of greater generation efficiency. Moreover, if the generation process is limited to a single step, these models can also be adapted to solve the optimal transport (OT) problem\cite{liu2022flow, frans2024one}.

The primary challenge lies in identifying suitable metrics that possess both good statistical and optimization properties for finding transport maps. The \textit{Wasserstein distance}, based on OT, has been employed for this purpose in various machine learning problems \cite{villani2021topics, ambrosio2021lectures, figalli2021invitation, peyre2019computational}. A particularly notable property of the \textit{Wasserstein distance} is its applicability between distributions that do not share the same support, which is often the case when working with empirical distributions.
In our previous works \cite{DP_22,wang2024deepparticle}, we developed a DeepParticle (DP) method to learn and generate solutions based on particle representation. For Keller–Segel (KS) chemotaxis systems \cite{keller1970initiation} that depend on physical parameters (e.g., flow amplitude in the advection-dominated regime and evolution time), DP minimizes the \textit{2-Wasserstein distance} between the source and target distributions. Unlike the multi-step iterative process in diffusion models, DP requires only a single call to a  neural network to approach the target distribution. Essentially, this constitutes a one-step process to map an initial distribution $\pi_0$ to a target distribution $\pi_1$ at given (physical) parameters.


In this work, we further develop an efficient deep learning approach, the Bidirectional DeepParticle (BDP) method, to learn and solve the physically parameterized transport map problems. On top of the uni-directional map in DP \cite{DP_22,wang2024deepparticle}, BDP will make the network learn both the forward mapping $f: X \to Y$ and the reverse mapping $g: Y \to X$ simultaneously to improve the performance and stability. Since the use of a costly transition matrix is unavoidable when calculating the \textit{2-Wasserstein distance}, we carry out a \textit{mini-batch} technique during the training process \cite{fatras2021minibatch, sommerfeld2019optimal}. We will introduce this technique and analyze its error in approximating the \textit{2-Wasserstein distance}. Additionally, we compare the BDP performance with two representative single-step diffusion (flow-matching) models in 
generating target distributions from Keller-Segel chemotaxis data, and mixture of Gaussians. In lower dimensions, such as $2$ and $3$ space dimensions in physics, or when data volume is not too large, we observed that DP models can achieve better accuracies than 
single-step diffusion and flow-matching models
while remaining efficient. 
However, the accuracies of DP models (in  \textit{Wasserstein distance}) decrease with increasing dimension beyond 3, while the single-step diffusion and flow-matching models are much less sensitive. This critical phenomenon may generalize to other data sets and is worth further study. For example, whether the critical dimension is almost universal or problem dependent.


The rest of the paper is organized as follows. In Section 2, we present our DeepParticle method with the \textit{bidirectional} idea to learn the forward and reverse transport maps between $\pi_0$ and $\pi_1$. In connection with OT, we briefly review the framework and algorithm of two single-step models, \textit{Rectified flow} \cite{liu2022flow} and \textit{Shortcut model} \cite{frans2024one}, for the subsequent comparison study. In Section 3, we show numerical results to demonstrate the performance of our method and compare with the single-step models.  Finally, concluding remarks and future work are in Section 4.

\section{Bidirectional Deep Particle Method}\label{sec:ConvergenceAnalysis}
\noindent
In this section, we would like to introduce our BDP method and its corresponding network architecture to learn the features of the transport map between two distributions $\pi_0$ and $\pi_1$. Compared with recent diffusion models (e.g. DDPM \cite{ho2020denoising}, DDIM \cite{song2020denoising}), the sampling process in Deep Particle methods can be viewed as being completed in only a single step. The mapping error of the Deep Particle methods is measured based on the \textit{2-Wasserstein distance} (W2).

\subsection{2-Wasserstein distance}
Given distributions $\pi_0$ and $\pi_1$ defined in the metric space $X$ and $Y$, we attempt to find a transport map $f_*^0: X \to Y$ such that $f_*^0(\pi_0) = \pi_1$, where $*$ denotes the push-forward of the transport map. For any function $f_*: X \to Y$, the \textit{2-Wasserstein distance} between $f_*(\pi_0)$ and $\pi_1$ can be defined by:
\begin{equation}
    W_2 (f_*(\pi_0), \pi_1) := \Big( \inf_{\gamma \in \Gamma (\pi_0, \pi_1)} \int_{X \times Y} c_1(f_*(x), y)^2 d\gamma(x, y) \Big)^{\frac{1}{2}} ,
    \label{W2forward}
\end{equation}
where $\Gamma(\pi_0, \pi_1)$ denotes the collection of all measures on $X \times Y$ with marginals $f_*(\pi_0)$ and $\pi_1$ on the first and second factors and $c_1$ denotes the metric (distance) on $Y$. Similarly, for any function $g_*: Y \to X$, we could define
\begin{equation}
    W_2(g_*(\pi_1), \pi_0) := \Big( \inf_{\gamma \in \Gamma (\pi_1, \pi_0)} \int_{Y \times X} c_0(g_*(y), x)^2 d\gamma(y, x) \Big)^{\frac{1}{2}} ,
\end{equation}
where $\Gamma(\pi_1, \pi_0)$ denotes the collection of all measures on $Y \times X$ with marginals $g_*(\pi_1)$ and $\pi_0$ on the first and second factors and $c_0$ denotes the metric (distance) on $X$. In practical implementation, the distribution $\pi_0$ and $\pi_1$ is approximated by the empirical distribution functions with the particles $N$ (samples), i.e. $\pi_0 = \frac{1}{N} \sum_{i=1}^N \delta_{x_i}, \pi_1 = \frac{1}{N} \sum_{i=1}^N \delta_{y_i}$. It is known that any joint distribution in $\Gamma(\pi_0, \pi_1)$ can be approximated by a $N \times N$ stochastic transition matrix \cite{sinkhorn1964relationship}, $\gamma = (\gamma_{ij})_{i,j}$ ,which satisfies 
\begin{equation}
    \forall i, j,\ \gamma_{ij} \geq 0;\ \forall i,\  \sum_{j=1}^N \gamma_{i,j} = 1;\ \forall j,\ \sum_{i=1}^N \gamma_{i,j} = 1.
\end{equation}
Then we can obtain the discretization of the \textit{2-Wasserstein distance} \eqref{W2forward}:
\begin{equation}
    \hat{W}_2(f) := \Big( \inf_{\gamma \in \Gamma^N} \frac{1}{N} \sum_{i,j}^N c_1(f(x_i), y_j)^2 \gamma_{i,j} \Big)^{\frac{1}{2}}.
    \label{disW2forward}
\end{equation}

\subsection{Methodology and network architecture}

Given the training datasets $\{x_i\}_{i=1}^M \subset \mathbb{R}^d$ and $\{y_j\}_{j=1}^M \subset \mathbb{R}^d$, we have derived \eqref{disW2forward} to be minimized using gradient descent. However, directly inputting all training data samples incurs significant memory costs for the transition matrix $\gamma \in \mathbb{R}^{M \times M}$. To mitigate this issue, we employ the \textit{mini-batch} technique commonly used in deep learning literature. Specifically, we select 
$N<M$ sub-samples from the data computed by the interacting particle method in each iteration of the training process, and we resample these sub-samples every 1000 iterations.

In solving problems such as KS systems \cite{keller1970initiation} that involve real physical parameters, we expect the network to effectively represent these parameters and learn how changes in them affect the target distribution. In this context, more than one set of training data ($\{x_i\}_{i=1}^N$ and $\{y_j\}_{j=1}^N$ consists of one set of data) should be assimilated. We denote the total number of distinct groups of physical parameters as $N_{dict}$. This means that the network will have $N_{dict}$ pairs of i.i.d. samples of input and output distribution, denoted by $\{x_{i,r}\}_{i=1}^N$ and $\{y_{j,r}\}_{j=1}^N$ for $r = 1 \cdots N_{dict}$. Correspondingly, we express these different physical parameters in terms of $\{\sigma_r\}_{r=1}^{N_{dict}}$ and let them be the inputs along with each set of $\{x_{i,r}\}_{i=1}^N$. This just means the input for the forward network is $\{(x_{i,r}, \sigma_r) \}_{i=1}^N$.

In addition, we expect this network to learn both the mapping from $\pi_0$ to $\pi_1$ and the mapping from $\pi_1$ to $\pi_0$, ensuring that these two mappings are consistent with each other. Consequently, there are effectively two sub-networks within the overall network architecture, and we include an error term in the loss function to verify this consistency. To be specific, for the network $f_{*}^{\theta}: X \to Y$ and $g_{*}^{\vartheta}: Y \to X$, the loss function can be represented by:
\begin{equation}
    Loss =  \hat{W}_2(f_{*}^{\theta} ) +  \hat{W}_2(g_{*}^{\vartheta}) + \lambda \cdot MSE(\bm x, g_{*}^{\vartheta} \circ f_{*}^{\theta} (\bm x; \sigma)),\ with
\end{equation}
\begin{equation}
    \hat{W}_2(f_{*}^{\theta} ) = \frac{1}{N \cdot N_{dict}} \sum_{r=1}^{N_{dict}} \Big( \inf_{\gamma_r \in \Gamma^N} \sum_{i,j = 1}^N |f^{\theta}_{*} (x_{i,r}; \sigma_r ) - y_{j,r}|^2 \gamma_{ij, r}  \Big), 
\end{equation}
\begin{equation}
    \hat{W}_2(g_{*}^{\vartheta} ) = \frac{1}{N \cdot N_{dict}} \sum_{r=1}^{N_{dict}} \Big( \inf_{\varphi_r \in \Gamma^N} \sum_{i,j = 1}^N |g^{\vartheta}_{*} (y_{i,r}; \sigma_r ) - x_{j,r}|^2 \varphi_{ij, r}  \Big), 
\end{equation}
\begin{equation}
    MSE(\bm x, g_{*}^{\vartheta} \circ f_{*}^{\theta} (\bm x; \sigma)) = \frac{1}{N \cdot N_{dict}} \sum_{r=1}^{N_{dict}} \sum_{i=1}^N ||x_{i,r} - g^{\vartheta}_{*} (f^{\theta}_{*} (x_{i,r}; \sigma_r ); \sigma_r )||^2,
    \label{MSEerr}
\end{equation}
where $\theta, \vartheta$ denote the parameters of two sub-networks respectively, $MSE(\cdot,\cdot)$ represents the mean square error between two groups of data, and $\lambda \in \mathbb{R}$ denotes its coefficient. Compared to the network architecture with only one forward mapping $f^{\theta}_{*}$, introducing the network $g^{\vartheta}_{*}$ can enhance the stability of the learned mapping (transition matrix). For example, we consider there exist two learned forward mapping networks $f^{\theta_1}_{*}$ and $f^{\theta_2}_{*}$, such that the difference between them is $f^{\theta_1}_{*}(x_1;\sigma) = f^{\theta_2}_{*}(x_2; \sigma) = y_1, f^{\theta_1}_{*}(x_2;\sigma) = f^{\theta_2}_{*}(x_1; \sigma) = y_2$. This situation may arise because we use the \textit{mini-batch} technique during the training process. Since the mapping between the two points is just swapped, the two networks will get the same result when calculating the \textit{2-Wasserstein distance}. For the model with two sub-networks, $f^{\theta}_{*}$ and $g^{\vartheta}_{*}$, since the training of $f^{\theta}_{*}$ and $g^{\vartheta}_{*}$ is independent and performed simultaneously, the two different forward networks $f^{\theta_1}_{*}$ and $f^{\theta_2}_{*}$ will have different performance in the \textit{MSE} loss term under a fixed $g^{\vartheta}_{*}$. For example, if the network $g^{\vartheta}_{*}$ has $g^{\vartheta}_{*}(y_1;\sigma) = x_3,\ g^{\vartheta}_{*}(y_2;\sigma) = x_4$, and $x_3 \neq x_4$, then it will occur the case that
\begin{equation}
    g^{\vartheta}_{*} (f^{\theta_1}_{*}(x_1;\sigma);\sigma) =  g^{\vartheta}_{*}(y_1;\sigma) = x_3,
\end{equation}
\begin{equation}
    g^{\vartheta}_{*} (f^{\theta_2}_{*}(x_1;\sigma);\sigma) =  g^{\vartheta}_{*}(y_2;\sigma) = x_4,
\end{equation}
and this will lead to different \textit{MSE} loss terms by following \eqref{MSEerr}. At this point, the network will tend to retain the one with the smaller error, thus avoiding the instability of the learned mapping caused by this.

For the architecture of the two sub-networks, we utilize fully connected networks (Multilayer Perceptrons). Each sub-network consists of three latent layers, with each layer having a width of 40 units. The activation function used is $tanh(\cdot)$. Then the relationship between two adjacent layers $l_i$ and $l_{i+1}$ can be represented by:
\begin{equation}
    l_{i+1} = tanh (W_i l_i + b_i),
\end{equation}
where $W_i$ is the weight matrix and $b_i$ is the bias vector of layer $l_i$. We perform similar operations for the output layer, but at this time we do not use the activation function.

\subsection{Mini-batch technique and relation to OT}

In this subsection, we aim to explain the rationale behind using the \textit{mini-batch} technique during the training process, as well as to analyze the errors induced by this approach. Directly computing the \textit{2-Wasserstein distance} between empirical probability distributions with $M$ points has a complexity of $O(M^3 \log M)$ \cite{peyre2019computational}, indicating that it is impractical for large data scenarios. To reduce this complexity, a promising technique is to regularize the Wasserstein distance with an entropic term. This enables the use of the efficient Sinkhorn-Knopp algorithm, which can be implemented in parallel and has a lower computational complexity of 
$O(M^2)$ \cite{altschuler2017near}. However, this complexity is still prohibitive for many large-scale applications.

To train a neural network on large-scale datasets using the \textit{2-Wasserstein distance}, several works have proposed leveraging a \textit{mini-batch} computation of OT distances and back-propagating the resulting gradient into the network. If we divide the data into $k$ batches, each with a batch size of $N$, this strategy results in a complexity of $O(kN^2)$, enabling the network to handle large datasets. However, the trade-off is that averaging several OT quantities between mini-batches introduces a deviation from the original OT problem. In the context of the OT problem with the entire dataset, the corresponding \textit{2-Wasserstein distance} can be described by

\begin{equation}
    \hat{W}_2(f(\bm x^{(M)}), \bm y^{(M)}) := \Big( \inf_{\gamma \in \Gamma^M} \frac{1}{M} \sum_{i,j}^M c_1(f(x_i), y_j)^2 \gamma_{i,j} \Big)^{\frac{1}{2}},
    \label{2WdisofM}
\end{equation}
where $f(\bm x^{(M)}) := \{ f(x_i)\}_{i=1}^M \subset \mathbb{R}^d$, and $\bm y^{(M)} := \{y_i\}_{i=1}^M \subset \mathbb{R}^d$. And in practice, we separate $\bm x^{(M)}$ into $k$ batches, $\bm x_1^{(N)}, \cdots, \bm x_k^{(N)}$ and each batch has $N$ data samples ($M=kN$). Similarly, we separate $\bm y^{(M)}$ and obtain $\bm y_1^{(N)}, \cdots, \bm y_k^{(N)}$. The averaged empirical optimal transport distance of the data with batch size $N$ can be represented by 
\begin{equation}
    \hat{W}_2^{(k)} (f) := \frac{1}{k}\sum_{h=1}^k \hat{W}_2 (f(\bm x_h^{(N)} ), \bm y_h^{(N)}).
    \label{2WdisofN}
\end{equation}

There have been some previous theoretical results about the general non-asymptotic guarantees for the quality of the approximation $\hat{W}_2^{(k)} (f)$ in terms of the expected $L_1$ error and $L_2$ error. Recall that, for $\alpha > 0$ the covering number $\mathcal{N} (X, \alpha)$ of $X$ is defined as the minimal number of closed balls with radius $\alpha$ and centers in $X$ that is needed to cover $X$. 
\begin{pro}[Theorem 2 in \cite{sommerfeld2019optimal}]
     \textit{Let $\hat{W}_2^{(k)}(f)$ be as in \eqref{2WdisofN} for any choice of $k \in \mathbb{N}^+$, then for any integer $q \geq 2$ and $l_{max} \in \mathbb{N}$:}
    \begin{equation}
        \mathbb{E} [|\hat{W}_2^{(k)} (f) - W_2(f(\pi_0), \pi_1)  |] \leq 2 \Psi_q^{\frac{1}{2}} N^{- \frac{1}{4}},
    \end{equation}
    \textit{where} $\Psi_q := 8 q^{4} (diam(X))^2 ( q^{-2(l_{max} + 1) } \sqrt{M} + \sum_{l=0}^{l_{max}}q^{-2l} \sqrt{\mathcal{N}(X, q^{-l} diam(X))} )$.
\end{pro}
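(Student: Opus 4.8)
The strategy is to strip the statement down to a fluctuation estimate for a \emph{single} empirical measure and then to prove that estimate by a multiscale coupling whose size is controlled by the covering numbers; the reductions are soft and all the content sits in the last step. First I would dispose of the averaging: each batch $\bm x_h^{(N)},\bm y_h^{(N)}$ is a size-$N$ subsample of the same data, so the random variables $|\hat W_2(f(\bm x_h^{(N)}),\bm y_h^{(N)})-W_2(f(\pi_0),\pi_1)|$ are identically distributed over $h$, and applying the triangle inequality to the average \eqref{2WdisofN} followed by $\mathbb{E}$ gives
\begin{equation}
  \mathbb{E}\big[\,|\hat W_2^{(k)}(f)-W_2(f(\pi_0),\pi_1)|\,\big] \le \mathbb{E}\big[\,|\hat W_2(f(\bm x^{(N)}),\bm y^{(N)})-W_2(f(\pi_0),\pi_1)|\,\big],
\end{equation}
so a single batch of size $N$ suffices. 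Writing $\mu:=f(\pi_0)$, $\nu:=\pi_1$ and letting $\hat\mu_N,\hat\nu_N$ be the empirical measures of the $N$ pushed-forward source points and of the $N$ target points in the batch, the fact that $W_2$ is a metric yields
\begin{equation}
  |\hat W_2(f(\bm x^{(N)}),\bm y^{(N)})-W_2(\mu,\nu)| = |W_2(\hat\mu_N,\hat\nu_N)-W_2(\mu,\nu)| \le W_2(\hat\mu_N,\mu)+W_2(\hat\nu_N,\nu),
\end{equation}
and taking $\mathbb{E}$ followed by Jensen's inequality ($\mathbb{E}[W_2]\le(\mathbb{E}[W_2^2])^{1/2}$) reduces the whole proposition to the one-sided bound $\mathbb{E}[W_2^2(\hat\mu_N,\mu)]\le\Psi_q N^{-1/2}$ and its analogue for $\nu$; adding the two square roots then produces exactly $2\Psi_q^{1/2}N^{-1/4}$.

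\emph{The multiscale estimate.} Fix $q\ge 2$. For each level $l=0,1,\dots,l_{max}$ I would take a $q^{-l}\,\mathrm{diam}(X)$-net of $X$ of cardinality $\mathcal{N}_l:=\mathcal{N}(X,q^{-l}\,\mathrm{diam}(X))$ and, from these nets, build a nested family of partitions $\mathcal{P}_0,\mathcal{P}_1,\dots,\mathcal{P}_{l_{max}}$ in which $\mathcal{P}_{l+1}$ refines $\mathcal{P}_l$, $|\mathcal{P}_l|\le\mathcal{N}_l$, and every cell of $\mathcal{P}_l$ has diameter at most $2q^{-l}\,\mathrm{diam}(X)$. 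Along this hierarchy one constructs an explicit coupling of $\hat\mu_N$ with $\mu$ that at stage $l$ merely reshuffles the excess mass inside each level-$l$ cell, routing it through the coarser parent cell; the cost of stage $l$ is then at most $\big(2q^{-(l-1)}\,\mathrm{diam}(X)\big)^2\sum_{A\in\mathcal{P}_l}|\hat\mu_N(A)-\mu(A)|$, plus a residual contribution from scales finer than $q^{-l_{max}}\,\mathrm{diam}(X)$ which, after bounding the covering number at the next scale crudely by $M$ (both $\hat\mu_N$ and $\mu$ are supported on at most $M$ points), is of order $\big(q^{-(l_{max}+1)}\,\mathrm{diam}(X)\big)^2\sqrt{M/N}$. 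Taking $\mathbb{E}$ and using Cauchy--Schwarz with $\mathrm{Var}(\hat\mu_N(A))\le\mu(A)/N$,
\begin{equation}
  \mathbb{E}\Big[\sum_{A\in\mathcal{P}_l}|\hat\mu_N(A)-\mu(A)|\Big] \le \sum_{A\in\mathcal{P}_l}\sqrt{\mu(A)/N} \le \sqrt{|\mathcal{P}_l|/N} \le \sqrt{\mathcal{N}_l/N},
\end{equation}
and likewise $\sqrt{M/N}$ for the residual stage. Summing the stages and collecting the powers of $q$ and the numerical factors then assembles into
\begin{equation}
  \mathbb{E}\big[W_2^2(\hat\mu_N,\mu)\big] \le 8q^4(\mathrm{diam}(X))^2\Big(q^{-2(l_{max}+1)}\sqrt{M}+\sum_{l=0}^{l_{max}}q^{-2l}\sqrt{\mathcal{N}(X,q^{-l}\,\mathrm{diam}(X))}\Big)N^{-1/2} = \Psi_q\,N^{-1/2},
\end{equation}
which closes the argument.

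\emph{Main obstacle.} Everything before the single-measure estimate is routine. The real work, and the main difficulty, is the multiscale step: (i) constructing partitions that are at once nested, of diameter comparable to $q^{-l}\,\mathrm{diam}(X)$, and of cardinality bounded by the covering numbers (the covers themselves are not nested, so the refinement must be carried out without inflating the number of cells, and this is precisely what forces the $q$-dependent constants in $\Psi_q$); (ii) defining the hierarchical coupling and charging its transport cost to the correct level so that the tail beyond $l_{max}$ is captured exactly by the $q^{-2(l_{max}+1)}\sqrt{M}$ term; and (iii) keeping track of the constants so the sum collapses to the stated $\Psi_q$. The rate $N^{-1/4}$ itself is not mysterious --- it is the $N^{-1/2}$ of the $L^2$ estimate degraded by a single Jensen step --- but the quantitative payoff is in choosing $q$ and $l_{max}$ to balance the scale weights $q^{-2l}$ against the growth of $\mathcal{N}(X,q^{-l}\,\mathrm{diam}(X))$, which is what makes $\Psi_q$ a useful bound for low-dimensional (small covering-number) targets.
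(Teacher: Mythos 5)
The paper does not actually prove this proposition: it is imported verbatim as Theorem~2 of \cite{sommerfeld2019optimal} to justify the mini-batch approximation, so there is no internal proof to compare against, and your reconstruction follows essentially the same route as the cited source --- reduce the $k$-batch average to a single batch by convexity and identical distribution, pass to the one-sample deviations $W_2(\hat\mu_N,\mu)$ and $W_2(\hat\nu_N,\nu)$ via the metric triangle inequality and Jensen, and bound $\mathbb{E}[W_2^2(\hat\mu_N,\mu)]\le \Psi_q N^{-1/2}$ by a multiscale covering-number argument. These reductions are correct, and the one piece you leave unverified (the nested-partition/tree construction from the non-nested covers, the charging of the tail to the $q^{-2(l_{max}+1)}\sqrt{M}$ term using the finite support of size $M$, and the bookkeeping that yields the exact factor $8q^4$) is precisely the content of the single-measure moment bound in \cite{sommerfeld2019optimal}, of which your sketch is the standard and correct outline.
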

 It can be observed that the expected $L_1$ error has a decay rate $O(N^{-\frac{1}{2p}})$ with respect to the batch size. And it has been shown that, in the Euclidean case, the optimal value for $q$ is $q = 2$. The mean square error also has an upper bound.
 \begin{pro}[Theorem 5 in \cite{sommerfeld2019optimal}]
    \textit{Let $\hat{W}_2^{(k)}(f)$ be as in \eqref{2WdisofN} for any choice of $k \in \mathbb{N}^+$. Then for any integer $q \geq 2$, the mean squared error of the empirical optimal transport distance can be bounded as} ($M=kN$)
    \begin{equation}
        \mathbb{E}[|\hat{W}^{(k)}_2 (f) - W_2 (f(\pi_0), \pi_1)|^2] \leq 18 \Psi_q N^{-\frac{1}{2}} = O(N^{-\frac{1}{2}}) 
    \end{equation}
    \label{prop2.2}
 \end{pro}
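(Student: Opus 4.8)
The plan is to strip the statement, by two routine reductions, down to a single bound on the expected squared $2$-Wasserstein error between an $N$-point empirical measure and the law it is drawn from, and then to prove that bound by the standard multiscale ($q$-adic) decomposition of $W_2^2$ that produces the quantity $\Psi_q$.

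\textbf{Step 1 (reduction to one batch).} For $h=1,\dots,k$ put $Z_h:=\hat W_2\big(f(\bm x_h^{(N)}),\bm y_h^{(N)}\big)-W_2(f(\pi_0),\pi_1)$. Since the $k$ batches are disjoint blocks of i.i.d.\ samples, the $Z_h$ are i.i.d., and from \eqref{2WdisofN} we have $\hat W_2^{(k)}(f)-W_2(f(\pi_0),\pi_1)=\frac1k\sum_{h=1}^k Z_h$. Expanding the square and using independence,
\begin{equation}
  \mathbb{E}\Big[\Big(\tfrac1k\sum_{h=1}^k Z_h\Big)^2\Big]=\tfrac1k\,\mathbb{E}[Z_1^2]+\tfrac{k-1}{k}\,(\mathbb{E}[Z_1])^2\le \mathbb{E}[Z_1^2],
\end{equation}
the last step by Jensen. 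So it suffices to treat $k=1$, $M=N$, i.e.\ to bound $\mathbb{E}[Z_1^2]$.

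\textbf{Step 2 (triangle inequality).} Let $\hat\pi_0^{(N)},\hat\pi_1^{(N)}$ be the empirical measures of the $N$ samples. By the matrix form of \eqref{disW2forward}, $\hat W_2(f(\bm x^{(N)}),\bm y^{(N)})=W_2\big(f_*\hat\pi_0^{(N)},\hat\pi_1^{(N)}\big)$, and $f_*\hat\pi_0^{(N)}$ is the empirical measure of $N$ i.i.d.\ draws from $f(\pi_0)$. The triangle inequality for $W_2$ together with $(a+b)^2\le 2a^2+2b^2$ gives
\begin{equation}
  \mathbb{E}[Z_1^2]\le 2\,\mathbb{E}\big[W_2^2\big(f_*\hat\pi_0^{(N)},f(\pi_0)\big)\big]+2\,\mathbb{E}\big[W_2^2\big(\hat\pi_1^{(N)},\pi_1\big)\big].
\end{equation}
Both terms are of the same type, so everything reduces to one estimate: for a probability measure $\mu$ supported in a set of diameter at most $D:=\mathrm{diam}(X)$ and its $N$-sample empirical measure $\hat\mu_N$, show $\mathbb{E}[W_2^2(\hat\mu_N,\mu)]\le \tfrac92\,\Psi_q\,N^{-1/2}$, after which the two terms sum to $2\cdot\tfrac92\Psi_q N^{-1/2}+2\cdot\tfrac92\Psi_q N^{-1/2}=18\,\Psi_q N^{-1/2}$.

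\textbf{Step 3 (multiscale bound, the core).} Fix $q\ge 2$. For each level $l=0,\dots,l_{max}$, cover $X$ by $\mathcal N(X,q^{-l}D)$ balls of radius $q^{-l}D$ and disjointify them into a partition $\mathcal P_l$ with $|\mathcal P_l|\le \mathcal N(X,q^{-l}D)$ and cell diameters $\le 2q^{-l}D$, refining so that $\mathcal P_{l+1}$ subdivides $\mathcal P_l$. Couple $\hat\mu_N$ to $\mu$ hierarchically: reconcile cell masses at level $0$, then within each level-$0$ cell reconcile at level $1$, and so on down to level $l_{max}$, matching leftover mass arbitrarily inside each finest cell and absorbing any residual into the trivial bound $D^2$. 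Telescoping the per-level transport costs gives a deterministic inequality of the form
\begin{equation}
  W_2^2(\hat\mu_N,\mu)\le c_1 D^2 q^{-2l_{max}}+c_2\sum_{l=0}^{l_{max}}(q^{-l}D)^2\sum_{A\in\mathcal P_l}\big|\hat\mu_N(A)-\mu(A)\big|
\end{equation}
with absolute constants $c_1,c_2$. Taking expectations: for fixed $l$, $(N\hat\mu_N(A))_{A\in\mathcal P_l}$ is multinomial with parameters $N$ and $(\mu(A))_{A\in\mathcal P_l}$, so $\mathbb{E}|\hat\mu_N(A)-\mu(A)|\le\sqrt{\mu(A)/N}$, and Cauchy--Schwarz yields $\mathbb{E}\sum_{A\in\mathcal P_l}|\hat\mu_N(A)-\mu(A)|\le\sqrt{|\mathcal P_l|/N}\le\sqrt{\mathcal N(X,q^{-l}D)/N}$; the finest-scale remainder is handled by the crude bound $\sqrt M$. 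Summing over $l$ reproduces exactly the bracket defining $\Psi_q$, the prefactor $8q^4$ in $\Psi_q$ being calibrated to absorb $c_1,c_2$ and the multinomial constant, which after Step 2 gives the factor $18$.

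\textbf{Main obstacle.} Steps 1 and 2 are bookkeeping (convexity and the triangle inequality). The real content is the hierarchical coupling in Step 3 and the telescoping of its cost across scales so that the right-hand side collapses precisely into $\Psi_q$ — the decomposition underlying \cite{sommerfeld2019optimal} and the Boissard--Le Gouic / Fournier--Guillin empirical-Wasserstein estimates — together with a careful accounting of the per-level multinomial fluctuations, of the finest-scale remainder, and of all constants to land exactly on $18$. One should also keep in mind the mild standing assumption that $f(\pi_0)$ and $\pi_1$ are supported in a common bounded set whose diameter and covering numbers are those entering $\Psi_q$ (e.g.\ $X=Y$ a bounded subset of $\mathbb{R}^d$), so that a single $\Psi_q$ controls both terms of Step 2.
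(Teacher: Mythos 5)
First, note that the paper does not prove this proposition at all: it is quoted verbatim as Theorem 5 of \cite{sommerfeld2019optimal}, so there is no in-paper proof to compare against. Measured against the cited source, your route is essentially the same one: reduce the $k$-batch average to a single batch (your Jensen step), reduce the two-sample deviation to one-sample empirical Wasserstein moments via the triangle inequality, and control $\mathbb{E}[W_2^2(\hat\mu_N,\mu)]$ by the multiscale ($q$-adic) partition bound with multinomial fluctuations, Cauchy--Schwarz, and covering numbers --- exactly the mechanism that produces $\Psi_q$ in \cite{sommerfeld2019optimal}. So the architecture of the argument is correct and not genuinely different from the reference.

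Two points in Step 3, however, need repair before the stated bound actually follows. (i) Your displayed deterministic inequality carries the resolution term $c_1 D^2 q^{-2l_{max}}$, which does not decay in $N$; for fixed $l_{max}$ it cannot be absorbed into $\Psi_q N^{-1/2}$. The way the $q^{-2(l_{max}+1)}\sqrt{M}$ term in $\Psi_q$ arises in the source is that the ground space is the finite set of $M$ data points, so the hierarchy is run one level further down to singleton cells, and the finest-level cost is again a multinomial-deviation term, bounded by $(q^{-(l_{max}+1)}D)^2\sqrt{M/N}$; there is then no leftover deterministic bias. Your closing remark about ``the crude bound $\sqrt M$'' gestures at this, but as written you have both the non-vanishing remainder and the $\sqrt M$ term, which is inconsistent with the form of $\Psi_q$ --- you should commit to the finite-support (singleton-level) treatment. (ii) The constant $18$, and the per-term claim $\mathbb{E}[W_2^2(\hat\mu_N,\mu)]\le \tfrac92\Psi_q N^{-1/2}$, are asserted with ``constants calibrated to absorb'' rather than derived; since the only content of the proposition beyond the rate is precisely this explicit constant, that accounting (including the factor $8q^4$ coming from the cell-diameter inflation in the partition construction) has to be carried out, or else the conclusion should be stated with an unspecified constant. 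Neither issue changes the approach, but both must be fixed for the proof to establish the proposition as stated; alternatively, one can simply cite Theorem 5 of \cite{sommerfeld2019optimal}, as the paper does.
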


\begin{cor}
    \textit{Let $P$ denote the loss function used during the training process of the BDP method (as mentioned in Algorithm \ref{Alg2} line 14), then $P$ will converge to the loss function $P_0$, which is implemented without the mini-batch technique, on average in $L_2$ as the batch size $N$ increases and has a convergence rate $O(N^{-\frac{1}{2}})$, i.e.}
    \begin{align}
        &\mathbb{E}[|\frac{1}{k}\sum_{h=1}^k P_h - P_0|^2] = O(N^{-\frac{1}{2}}),\ where\\
        P_0 := \sum_{r = 1}^{N_{dict}} \big[&\sum_{i,j}^M \big(|f_{\theta} (x_{i,r}; \sigma_r) - y_{j,r}|^2 \gamma_{ij,r}^f + |g_{\vartheta} (y_{i,r}; \sigma_r) - x_{j,r}|^2 \gamma_{ij,r}^g\big) \notag \\ &+ \frac{\lambda}{M} \sum_{i=1}^M|g_{\vartheta}(f_{\theta}(x_{i,r};\sigma_r); \sigma_r) - x_{i,r}|^2\big].
    \end{align}
\end{cor}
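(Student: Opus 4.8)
The plan is to reduce the assertion, term by term, to Proposition~\ref{prop2.2} together with an elementary variance estimate. First I would expand $\frac{1}{k}\sum_{h=1}^{k} P_h - P_0$ as a sum over $r = 1,\dots,N_{dict}$ and, within each $r$, split it into three pieces: the forward optimal-transport term (in $f_\theta$), the reverse optimal-transport term (in $g_\vartheta$), and the $\lambda$-weighted consistency term. Since $N_{dict}$ is fixed and the training loss and $P_0$ share the same normalization conventions (matching the prefactors of the two expressions is the first, purely clerical, step), Minkowski's inequality in $L_2$ reduces the claimed bound to proving that each of these $3N_{dict}$ pieces has $L_2$-norm $O(N^{-1/4})$, i.e.\ squared error $O(N^{-1/2})$.

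For the forward transport piece attached to a given $r$, I would interpose the deterministic true Wasserstein distance, since there is no clean direct inequality between the averaged mini-batch cost and the full-data cost but Proposition~\ref{prop2.2} controls both of them against the population quantity. Write $a_h := \hat{W}_2(f_\theta(\bm x_h^{(N)}),\bm y_h^{(N)})$ for the per-batch distance and $b := W_2(f_\theta(\pi_{0,r}),\pi_{1,r})$ for the population distance between the $r$-th source and target; since the loss is built from squared transport costs, the mini-batch contribution is a fixed multiple of $\frac{1}{k}\sum_{h=1}^{k} a_h^2$ and the matching $P_0$-term is a fixed multiple of $\hat{W}_2(f_\theta(\bm x^{(M)}),\bm y^{(M)})^2$. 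Because, for fixed network parameters, $f_\theta(\cdot;\sigma_r)$ is a continuous map of the (assumed bounded) data domain, all these quantities are bounded by a finite constant $D$, so $\big|\frac{1}{k}\sum_h a_h^2 - b^2\big| \le 2D\,\frac{1}{k}\sum_h |a_h - b|$ and, by convexity of $t\mapsto t^2$, $\mathbb{E}\big[\big|\frac{1}{k}\sum_h a_h^2 - b^2\big|^2\big] \le 4D^2\,\frac{1}{k}\sum_h \mathbb{E}[|a_h - b|^2]$. Each summand is $O(N^{-1/2})$ by Proposition~\ref{prop2.2} applied to a single batch of size $N$, and the same proposition applied to one batch of size $M$ gives $\mathbb{E}\big[\big|\hat{W}_2(f_\theta(\bm x^{(M)}),\bm y^{(M)})^2 - b^2\big|^2\big] = O(M^{-1/2}) = O(N^{-1/2})$. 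A triangle inequality in $L_2$ then delivers the $O(N^{-1/4})$ bound for this piece, and the reverse piece follows verbatim with $g_\vartheta$, $c_0$ and the roles of $\pi_{0,r},\pi_{1,r}$ interchanged.

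The consistency piece requires no optimal transport. The integrand $\varphi(x) := \|g_\vartheta(f_\theta(x;\sigma_r);\sigma_r) - x\|^2$ is a fixed bounded function on the data domain, and the mini-batch average $\frac{\lambda}{k}\sum_h \frac{1}{N}\sum_{i\in I_h}\varphi(x_{i,r})$ either equals $\frac{\lambda}{M}\sum_{i=1}^{M}\varphi(x_{i,r})$ exactly — when the batch index sets $I_h$ partition $\{1,\dots,M\}$ — or, like the full-data average, is an unbiased estimator of $\lambda\,\mathbb{E}_{x\sim\pi_{0,r}}[\varphi(x)]$; in either case the standard variance bound for averages of bounded i.i.d.\ variables gives an $L_2$ difference of order $N^{-1/2}$, no worse than the transport pieces. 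Collecting the $2N_{dict}$ transport estimates and the $N_{dict}$ consistency estimates via Minkowski then yields $\mathbb{E}\big[\big|\frac{1}{k}\sum_h P_h - P_0\big|^2\big] = O(N^{-1/2})$.

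I expect the main obstacle to be bookkeeping rather than any genuine difficulty, and it has two facets. First, Proposition~\ref{prop2.2} bounds the square-rooted distance $\hat{W}_2$, whereas the BDP loss uses the squared transport cost; this is exactly why the boundedness of the data domain is invoked, so that one can pass directly in $L_2$ from $\mathbb{E}[|a_h - b|^2]$ to $\mathbb{E}[|a_h^2 - b^2|^2] \le 4D^2\,\mathbb{E}[|a_h - b|^2]$, keeping the $N^{-1/2}$ rate instead of routing through an $L_1$ bound and losing a power. Second, one must check that the constants in play — the diameter $D$ and the quantity $\Psi_q$ of Proposition~\ref{prop2.2} — are finite and independent of the batch index and of $N$; this holds because for fixed $\theta,\vartheta$ the maps $f_\theta(\cdot;\sigma_r),g_\vartheta(\cdot;\sigma_r)$ have compact image, so the diameters and covering numbers entering $\Psi_q$ are uniformly bounded.
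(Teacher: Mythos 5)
Your proposal is correct and follows essentially the same route as the paper's proof: drop the consistency term (which cancels exactly because the mini-batches partition the full sample), interpose the population distance $W_2(f(\pi_0),\pi_1)$, and apply Proposition \ref{prop2.2} twice — once for the size-$N$ batches and once for the full sample of size $M=kN$. The only real difference is that you explicitly bridge the squared transport costs appearing in the loss to the square-rooted $\hat{W}_2$ controlled by Proposition \ref{prop2.2} via boundedness of the data domain ($|a_h^2-b^2|\le 2D|a_h-b|$), a normalization/squaring mismatch the paper's proof glosses over by identifying the loss's transport terms directly with $\hat{W}_2^{(k)}$.
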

\begin{proof}
    \begin{align}
        \frac{1}{k} \sum_{h=1}^k P_h  &= \sum_{r=1}^{N_{dict}} \big[\hat{W}_2^{(k)} (f) + \hat{W}_2^{(k)} (g) + \frac{\lambda}{k N} \sum_{h=1}^k \sum_{i=1}^N |g_{\vartheta}(f_{\theta}(x_{i,r}^{(h)},\sigma_r), \sigma_r) - x_{i,r}^{(h)}|^2 \big]\notag \\
        & = \sum_{r=1}^{N_{dict}}\big[\hat{W}_2^{(k)} (f) + \hat{W}_2^{(k)} (g) + \frac{\lambda}{M} \sum_{i=1}^M |g_{\vartheta}(f_{\theta}(x_{i,r},\sigma_r), \sigma_r) - x_{i,r}|^2 \big].
    \end{align}
    \begin{equation}
        P_0 = \sum_{r=1}^{N_{dict}}\big[\hat{W}_2(f(\bm x^{(M)}), \bm y^{(M)}) + \hat{W}_2(g(\bm y^{(M)}), \bm x^{(M)}) + \frac{\lambda}{M} \sum_{i=1}^M |g_{\vartheta}(f_{\theta}(x_{i,r};\sigma_r); \sigma_r) - x_{i,r}|^2 \big].
    \end{equation}
    Then we can obtain that 
    \begin{align}
        \mathbb{E}[|\frac{1}{k}&\sum_{h=1}^k P_h - P_0|^2] = \mathbb{E}[|\sum_{r=1}^{N_{dict}} [\hat{W}_2^{(k)} (f) + \hat{W}_2^{(k)} (g) - \hat{W}_2(f(\bm x^{(M)}), \bm y^{(M)}) - \hat{W}_2(g(\bm y^{(M)}), \bm x^{(M)})]|^2 ] \notag \\ 
        & \leq 2\mathbb{E} [\sum_{r=1}^{N_{dict}}[ |\hat{W}_2^{(k)} (f) - \hat{W}_2(f(\bm x^{(M)}), \bm y^{(M)})|^2 + |\hat{W}_2^{(k)} (g) - \hat{W}_2(g(\bm y^{(M)}), \bm x^{(M)})|^2]]\notag \\
        &= 2\sum_{r=1}^{N_{dict}} [\mathbb{E}[|\hat{W}_2^{(k)} (f) - \hat{W}_2(f(\bm x^{(M)}), \bm y^{(M)})|^2] + \mathbb{E}[|\hat{W}_2^{(k)} (g) - \hat{W}_2(g(\bm y^{(M)}), \bm x^{(M)})|^2]]\notag\\
        & \leq 4\sum_{r=1}^{N_{dict}} \Big[\mathbb{E}[|\hat{W}_2^{(k)} (f) - W_2(f(\pi_0), \pi_1)|^2] + \mathbb{E}[|W_2(f(\pi_0), \pi_1)-\hat{W}_2(f(\bm x^{(M)}), \bm y^{(M)}) |^2]\notag \\
        &\quad + \mathbb{E}[|\hat{W}_2^{(k)} (g) - W_2(g(\pi_1), \pi_0)|^2] + \mathbb{E}[|W_2(g(\pi_1), \pi_0)-\hat{W}_2(g(\bm y^{(M)}), \bm x^{(M)})\Big]\notag \\
        &\leq \sum_{r=1}^{N_{dict}} 72 (\Psi_{f,r} + \Psi_{g,r}) (N^{-\frac{1}{2}} + (kN)^{-\frac{1}{2}}) \leq \sum_{r=1}^{N_{dict}} 144 (\Psi_{f,r} + \Psi_{g,r}) N^{-\frac{1}{2}} = O(N^{-\frac{1}{2}}), 
    \end{align}
    where $\Psi_{f,r}, \Psi_{g,r}$ are constants and determined by Proposition \ref{prop2.2}.
\end{proof}

\begin{algorithm}[h]
    \caption{Mini-batch technique to get a statistical approximation of $W_2(f(\pi_0), \pi_1)$}
    \begin{algorithmic}[1]
        \STATE \textbf{Input:} Probability measures $\pi_0, \pi_1$, batch-size $N$ and number of batch $k$.
        \FOR{$i = 1 \cdots k$}
            \STATE Sample i.i.d $\{x_i\}_{i=1}^N \sim \pi_0, \{y_j\}_{j=1}^N \sim \pi_1$.
            \STATE Compute $\hat{W}_2^{(i)} \leftarrow \hat{W}_2(f(\bm x^{(N)}), \bm y^{(N)})$
        \ENDFOR
        \STATE \textbf{Return:} $\hat{W}_2^{(k)}(f) \leftarrow \frac{1}{k} \sum_{h=1}^k \hat{W}_2^{(h)}$.
    \end{algorithmic}
\end{algorithm}

\begin{algorithm}[h]
    \caption{BDP method}
    \begin{algorithmic}[1]
        \STATE \textbf{Input:} Randomly initialize weight parameters $f_{\theta}$ and $g_{\vartheta}$ in the network. Probability measures $\pi_0, \pi_1$, batch-size $N$ , number of physical parameters groups $N_{dict}$ and number of batch $k$.
        \FOR{$h = 1 \cdots k$}
            \FOR{$r = 1 \cdots N_{dict}$}
            \STATE Sample i.i.d $\{x_{i,r}\}_{i=1}^N \sim \pi_0, \{y_{j,r}\}_{j=1}^N \sim \pi_1$ w.r.t. physical parameter $\sigma_r$.
            \STATE Set $\gamma^{f}_{ij,r}, \gamma^{g}_{ij,r} \leftarrow \delta_{ij}$, i.e. initialize permutation matrices;
        \ENDFOR    
        \IF{not the first training mini-batch}
        \FOR{$r = 1 \cdots N_{dict}$}
        \STATE Solve the Earth Movers distance problem between $f_{\theta}(\bm x_r)$ and $\bm y_r$, and update the permutation matrix $\bm \gamma^f_r$ i.e. $\bm \gamma^f \leftarrow ot.emd(\bm a,\bm b, ot.dist(f_{\theta}(\bm x_r,\sigma_r)), \bm y_r)$, where $\bm a= \bm b = (\frac{1}{N}, \cdots, \frac{1}{N})^T \in \mathbb{R}^N$ and using the OT package
        \STATE Solve the Earth Movers distance problem between $g_{\vartheta}(\bm y_r,\sigma_r)$ and $\bm x_r$, and update the permutation matrix $\bm \gamma^g_r$
        \ENDFOR
        \ENDIF
        \REPEAT
        \STATE Compute the loss $P = \sum_{r = 1}^{N_{dict}} \sum_{i,j}^N \big(|f_{\theta} (x_{i,r}, \sigma_r) - y_{j,r}|^2 \gamma_{ij,r}^f + |g_{\vartheta} (y_{i,r}, \sigma_r) - x_{j,r}|^2 \gamma_{ij,r}^g\big) + \frac{\lambda}{N} \sum_{i=1}^N|g_{\vartheta}(f_{\theta}(x_{i,r},\sigma_r), \sigma_r) - x_{i,r}|^2$
        \STATE $\theta \leftarrow \theta - \delta_1\nabla_{\theta} P$,where $\delta_1$ is the learning rate
        
        \STATE $\vartheta \leftarrow \vartheta - \delta_2\nabla_{\vartheta} P$, where $\delta_2$ is the learning rate
        \STATE Repeat the process (8)-(11) and update the permutation matrices
        \UNTIL{given steps for each training mini-batch;}
        \ENDFOR
        \STATE \textbf{Return:} Save the trained model.
    \end{algorithmic}
    \label{Alg2}
\end{algorithm}

\subsection{Brief introduction to related approaches}

\subsubsection{Rectified flow}
The rectified flow \cite{liu2022flow} is an ODE model that transports distribution $\pi_0$ to $\pi_1$ by following straight line paths as much as possible. Straight paths are preferred both theoretically, as they represent the shortest distance between two endpoints, and computationally because they can be simulated exactly without the need for time discretization. Consequently, flows along straight paths bridge the gap between one-step and continuous-time models.
\begin{equation}
    dZ_t = v(Z_t, t) dt
\end{equation}

The drift $v$ drives the flow to follow the direction $(X_1 - X_0)$ of the linear path pointing from $X_0$ to $X_1$ as much as possible, by solving a simple least squares regression problem:
\begin{equation}
    \min_{v} \int_{0}^{1} \mathbb{E} [||(X_1 - X_0) - v(X_t, t) ||^2] dt \quad with\ X_t = t X_1 + (1- t) X_0
    \label{LSEofRectFlow}
\end{equation}

The main algorithm of rectified flow can be divided into the following steps:
\begin{itemize}
    \item Input: Draw $(X_0, X_1)$ from $\pi_0$ and $\pi_1$ and initialize $v^{\theta}$ with parameter $\theta$.
    \item Training: Solve the least squares regression problem \eqref{LSEofRectFlow} with $t \sim U(0,1)$ and get $v^{\hat{\theta}}$.
    \item Sampling: Draw $(Z_0, Z_1)$ following $dZ_t = v^{\hat{\theta}}(Z_t, t)dt$ starting from $Z_0 \sim \pi_0$. 
    \item Reflow: Start from $(Z_0, Z_1) = (X_0, X_1)$ and repeat the previous three steps.
\end{itemize}
\subsubsection{Shortcut diffusion model}
Shortcut diffusion model \cite{frans2024one} is a family of generative models that use a single network and training phase to produce high-quality samples in single or multiple sampling steps. Unlike distillation or consistency models, shortcut models are trained in a single training run without a schedule.
It defines $X_t$ as a linear interpolation between a data point $X_1 \sim \pi_1$ and a noise point $X_0 \sim N(\bm 0, \mathbb{I})$. The velocity field $v_t$ is the direction from the noise to the data point, i.e.
\begin{equation}
    X_t = (1-t) X_0 + t X_1,\ and\ v_t = X_1 - X_0. 
\end{equation}
Given only $X_0$ renders $v_t$ a random variable because there are multiple plausible pairs $(X_0, X_1)$ and different values that the velocity can take on.  The shortcut diffusion model would like to train a single model that supports different sampling budgets, by conditioning the model not only on the timestep $t$ but also on a desired step size $d$. Conditioning on $d$ allows shortcut models to account for the future curvature, and jump to the correct next point rather than going off track. They refer to the normalized direction from $X_t$ towards the correct next point $X'_{t+d}$ as the shortcut $s(X_t, t, d)$, i.e. $X'_{t+d} = X_t + s(X_t, t, d)$.

By leveraging an inherent self-consistency property of shortcut models, namely that one shortcut step equals two consecutive shortcut steps of half the size, i.e.
\begin{equation}
    s(X_t, t, 2d) = s(X_t, t, d) / 2 +  s(X'_{t+d}, t, d) / 2.
\end{equation}
In practice, they approximate $s(X_t, t, d)$ by $s^{\theta}(X_t, t, d)$ which is learned by the neural network. During the training process, they split the batch into a fraction that is trained with $d = 0$ and another fraction with randomly sampled $d > 0$ targets and arrive the combined loss function as
\begin{equation}
    \mathcal{L}^S(\theta) = \mathbb{E}_{(t,d) \sim p(t,d)} [||s^{\theta}(X_t, t, 0) - (X_1 - X_0) ||^2 + ||s^{\theta}(X_t, t, 2d) - s_{target}||^2],
    \label{lossfctofSC}
\end{equation}
where $ s_{target} =  s^{\theta}(X_t, t, d) / 2 +  s^{\theta}(X'_{t+d}, t, d) / 2$. In the practical implementation, $d$ takes discrete values and has a minimum unit $\frac{1}{M}$ and the main algorithm can be divided into the following steps:
\begin{itemize}
    \item Input: Initialize $X_0 \sim N(\bm 0,\mathbb{I})$, $X_1 \sim \pi_1$, and $(t,d) \sim p(t,d)$. 
    \item Training: Learn and obtain $\theta$ by minimize the loss function \eqref{lossfctofSC} (For the first $k$ batch, set $d = 0, s_{target} = X_1 - X_0$).
    \item Sampling: Set $X\sim N(\bm 0,\mathbb{I})$, $d = \frac{1}{M}$, $t = 0$. For $n \in [0, \cdots, M-1]$, compute $X \leftarrow X + s^{\theta}(X, t+nd, d)$.
\end{itemize}

\section{Numerical Experiments}\label{sec:NumericalResults}
\noindent
In this section, we will present several numerical examples computed by our BDP method and compare them with several one-step models. 
\subsection{KS simulation and generation in the presence of 3D Laminar flow}

The KS model, a partial differential equation system describing chemotaxis-driven aggregation in Dictyostelium discoideum \cite{keller1970initiation}. 
A common form of the KS model can be represented by: 
\begin{equation}
    \rho_t = \nabla \cdot (\mu \nabla\rho - \chi \rho \nabla c),\quad  \epsilon c_t = \Delta c - k^2 c + \rho,
    \label{KSEqu}
\end{equation}
where $\rho$ is the density of the bacteria, $c$ is the concentration of the chemo-attractant, and $\mu, \chi, \epsilon, k$ are non-negative constants. The coupled processes in \eqref{KSEqu} capture the motion of bacteria that diffuse with mobility $\mu$ and drift in the direction of $\nabla c$ with velocity $\chi \nabla c$. 

When parameters $(\epsilon, k) = \bm 0$, the equation related to the concentration $c$ in \eqref{KSEqu} is reduced to a simple Poisson equation $\rho = - \Delta c$. When proper boundary conditions are imposed on $c$, the classical solution assumes the convolution form $c = - \mathcal{K} * \rho$ with $\mathcal{K}$ representing the Green's function of the Laplacian. By substituting this solution into the equation related to the density $\rho$ in \eqref{KSEqu}, we can obtain that
\begin{equation}
    \rho_t = \mu \Delta \rho + \chi \nabla \cdot (\rho \nabla(\mathcal{K * \rho})). 
    \label{KSmodellinrho}
\end{equation}
The original KS system is reduced to a non-local non-linear advection-diffusion PDE. To capture chemo-tactic transport phenomena in fluid dynamic settings such as marine hydrodynamic, previous investigations have focused on the modified transport equation
\begin{equation}
    \mathcal{L}_{\bm v} \rho = \mu \Delta \rho + \chi \nabla \cdot (\rho \nabla(\mathcal{K * \rho})),
    \label{MLKSequ}
\end{equation}
where $\mathcal{L}_{\bm v} := \partial_t \rho + \nabla\cdot(\bm v\rho)$ represents the advective Lie derivative accounting for  fluid velocity field $\bm v$. Here we consider the KS model with advection term and $\bm v$ is a divergence-free velocity field. Under the influence of the environmental velocity field $\bm v$, the movement of the organism and the blow-up behavior of the model are also affected and deserve to be investigated. Equation \eqref{MLKSequ} can be approximated by the interacting particle system below:
\begin{equation}
    d \bm X_j = -\frac{\chi M}{J} \nabla_{\bm X_j} \sum_{i=1, i\neq j}^{J} \mathcal{K}(|\bm X_i - \bm X_j|) \, dt + \bm v(\bm X_j)\, dt +  \sqrt{2\mu} \, d \bm W_j,\ j = 1,\cdots, J,
    \label{IPMofKS}
\end{equation}
with $M$ is the conserved total mass and $\{\bm W_j\}_{j = 1}^J$ are independent $d$-dimensional Brownian motions. 
In practice, numerical instability emerges in the chemo-attractant component $\sum_{i=1, i\neq j}^{J}$ $ \mathcal{K}(|\bm X_i - \bm X_j|)$ as particles concentrate, due to unbounded kernel contributions at small interparticle distances. To avoid this situation, we replace the original kernel $\mathcal{K}(\cdot)$ with a smoothed approximation $\mathcal{K}_{\delta}(\cdot)$, such that $\mathcal{K}_{\delta} (z) \to \mathcal{K}(z)$ as $\delta \to 0$, and $\delta$ here is a regularization parameter. For example, we can define $\mathcal{K}_{\delta} (z) = \mathcal{K} (z) \cdot \frac{|z|^2} { |z|^2 + \delta^2}$ and obtain the following regularized SDE system:
\begin{equation}
     d \bm X_j = -\frac{\chi M}{J} \nabla_{\bm X_j} \sum_{i=1, i\neq j}^{J} \mathcal{K}_{\delta}(|\bm X_i - \bm X_j|) \, dt + \bm v(\bm X_j)\, dt +  \sqrt{2\mu}\,  d \bm W_j,\ j = 1,\cdots, J.
    \label{IPMofKSreg}
\end{equation}
A well-studied KS model is a 2-dimensional system described by \eqref{KSmodellinrho} with $\mu = \chi = 1$. The total mass of this system satisfies the conservation law, i.e.
\begin{equation}
    \frac{d}{dt} \int_{\mathbb{R}^2} \rho(\bm x, t) d\bm x = 0.
\end{equation}
This is also true for the system described by \eqref{MLKSequ} if the velocity field $\bm v$ is divergence-free. If we denote $M :=  \int_{\mathbb{R}^2} \rho(\bm x, t) d\bm x$, then the second moment will have a fixed derivative with respect to time $t$, i.e.
\begin{equation}
    \frac{d}{dt} \int_{\mathbb{R}^2} |x|^2 \rho(x, t) dx = \frac{M}{2\pi} (8\pi - M),
\end{equation}
where $8\pi$ is called the critical mass of the KS system. In this regard, it is well recognized that if $M = 8\pi$, then the system will have a global smooth solution which will blow up as $t \to \infty$. It has been proved that if the total mass is smaller than the critical mass, the system \eqref{KSmodellinrho} and the system \eqref{MLKSequ} will have a global smooth solution with smooth initial data. In situations involving supercritical mass,  numerical experiments indicate that if advection is sufficiently large, it can prevent the solutions from blowing up.
We now consider three space dimensions (3D) and divergence-free velocity field to be a laminar flow, i.e.
\begin{equation}
    \bm v (x, y, z) = \sigma \cdot \Big(\exp(-y^2 - z^2),\ 0,\ 0 \Big)^T,
\end{equation}
which describes a flow of amplitude $\sigma$ traveling along the $x$-direction with speed depending on the radial position in the $yz$-plane. The training data is generated by the regularized interacting particle method \eqref{IPMofKSreg}. We first let the network learn the dependence of the aggregation patterns on the amplitude of the advection field $\sigma$ while fixing the evolution time $T = 0.02$. The physical parameter $\sigma$ of the data samples sent for training is isometrically distributed between $\sigma = 10$ and $\sigma = 150$. Since we are interested in the performance of small size models so that they can be implemented very efficiently, we set the network parameter size to 3k in subsequent experiments. Before comparing the performance of various models, we investigate the dependence of the new BDP model on $\lambda$, the coefficient of the identical verification error term. 

\begin{table}[H]
    \centering
    \caption{BDP Performance in 3D Laminar flow as $\lambda$ varies (best in bold).}
    \begin{tabular}{ccccccc}
        \toprule
        \multirow{2}{*}{Sigma} & &
        & \multicolumn{2}{c}{$W_2$ distance } \\
        \cline{2-7}
         & $\lambda = 0$ & $\lambda = 10^{-5}$ & $\lambda = 10^{-4}$ &$\lambda = 10^{-3}$  & $\lambda = 10^{-2}$ &  $\lambda = 10^{-1}$       \\
        
        \midrule
        $\sigma = 10^{(\circ)}$   & 0.0043 & 0.0034 & 0.0028 & \textbf{0.0027} &  0.0057   &  0.0096  \\
        $\sigma = 30^{(\blacktriangle)}$      & 0.0046 & 0.0061 & 0.0049 & \textbf{0.0025} &  0.0031  &  0.0109\\
        $\sigma = 50^{(\blacktriangle)}$    & 0.0062 & 0.0085 & 0.0069 & \textbf{0.0061} &  0.0063  & 0.0115 \\
        $\sigma = 80^{(\blacktriangle)}$    & 0.0127 & 0.0129 & 0.0151 & \textbf{0.0112} &  0.0147  & 0.0171 \\
        $\sigma = 100^{(\blacktriangle)}$    & 0.0059 & 0.0135 & 0.0101 & 0.0079 &  \textbf{0.0051}  & 0.0074  \\
        $\sigma = 120^{(\blacktriangle)}$     & 0.0078 & 0.0194 & 0.0114 & 0.0088 &  \textbf{0.0062}  & 0.0102 \\
        $\sigma = 150^{(\circ)}$      & 0.0136 & 0.0270 & 0.0151 & 0.0171 &  \textbf{0.0049}  & 0.0223 \\
        $\sigma = 200^{(\bullet)}$    & \textbf{0.2734} & 0.4093 & 0.3513 & 0.3788 &  0.4366  & 0.5534\\
        
        \bottomrule
    \end{tabular}
    \label{DPBirTable}
\end{table}
 In Table \ref{DPBirTable} and subsequent tables, we use different superscript symbols to indicate different types of parameters.  Here, $\circ, \blacktriangle, \bullet$ represent the data samples that are used (generated) in \textit{training}, \textit{interpolation}, and \textit{extrapolation} respectively under the corresponding physical parameters.
 From Table \ref{DPBirTable}, it can be stated that for the majority of cases, the value of \textit{2-Wasserstein distance} is minimized when $\lambda = 10^{-3}$ or $\lambda = 10^{-2}$, and the corresponding models achieve optimal performance. Therefore, in the following experiments, we test the performance of the BDP model under both these two lambda and choose the one that performed better. 

\begin{table}[H]
    \centering
    \caption{Model Comparison in 3D Laminar flow (best in bold).}
    \begin{tabular}{cccccc}
        \toprule
        \multirow{3}{*}{Sigma $\setminus$ Model} & &
        \multicolumn{3}{c}{$W_2$ distance (network parameter size $3k$)}  \\
        \cline{2-6}
          &  DM & DM & DM & DP & DP\\
        \cline{2-6}
          &   & Rectified flow & Shortcut  & & Bi-direction\\
        
        \midrule
        $\sigma = 10^{(\circ)}$   & 0.0137 & 0.0202 & 0.0184 & 0.0044 &  \textbf{0.0027} \\
        $\sigma = 30^{(\blacktriangle)}$    & 0.0158 & 0.0250 & 0.0187 & 0.0046 &  \textbf{0.0025}    \\
        $\sigma = 50^{(\blacktriangle)}$    & 0.0208 & 0.0218 & 0.0232 & 0.0064 &  \textbf{0.0061}    \\
        $\sigma = 80^{(\blacktriangle)}$    & 0.0174 & 0.0285 & 0.0217 & 0.0127 &  \textbf{0.0112}    \\
        $\sigma = 100^{(\blacktriangle)}$   & 0.0204 & 0.0284 & 0.0235 & 0.0137 &  \textbf{0.0051}   \\
        $\sigma = 120^{(\blacktriangle)}$   & 0.0275 & 0.0312 & 0.0390 & 0.0078 &  \textbf{0.0062}    \\
        $\sigma = 150^{(\circ)}$    & 0.0398 & 0.0522 & 0.0455 & 0.0249 &  \textbf{0.0049}   \\
        $\sigma = 200^{(\bullet)}$   & 0.2859 & 0.3121 & 0.3429 & \textbf{0.2580} & 0.2734\\
        
        \bottomrule
    \end{tabular}
    \label{laminarTable}
\end{table}

        
        

Table \ref{laminarTable} shows the \textit{2-Wasserstein distance} between different models and the reference distribution generated by the interacting particle method. In this table, we use DM to denote diffusion/flow matching model, and DP to denote our deep particle method. The same parameter size (\textit{3k}) was used for all methods. In the third line, \textit{Rectified flow} and \textit{Shortcut} means two models with single-step generation techniques which have been mentioned above.  \textit{Bi-direction} refers to BDP method which learns both  forward and reverse mappings.  It is seen that the two Deep Particle methods perform better than other single step models. The two Deep Particle methods both perform well and closely. The inference times of all the models are listed in the following table. The 
differences among one-step methods 
(one call of network during inference) are minor (i.e. of the same order) mainly due to programming languages for implementation with (DM, Rectified flow and shortcut DM) in Pytorch, 
and (DP, BDP) in JAX.


\begin{table}[H]
    \centering
    \begin{tabular}{ccccccc}
    \toprule
        Model &  & Diffusion Model&  Rectified flow & Shortcut& DeepParticle  & BDP \\
        Inference time & & 49 s & 0.48 s & 0.54 s & 0.24 s & 0.26 s\\
    \bottomrule
    \end{tabular}
    \caption{Inference times in second (s) of DM and one-step models.}
    \label{InferTime}
\end{table}

\begin{figure}[H]
    \centering
    \begin{subfigure}{0.31\textwidth}
         \includegraphics[width = \linewidth, height = 4cm]{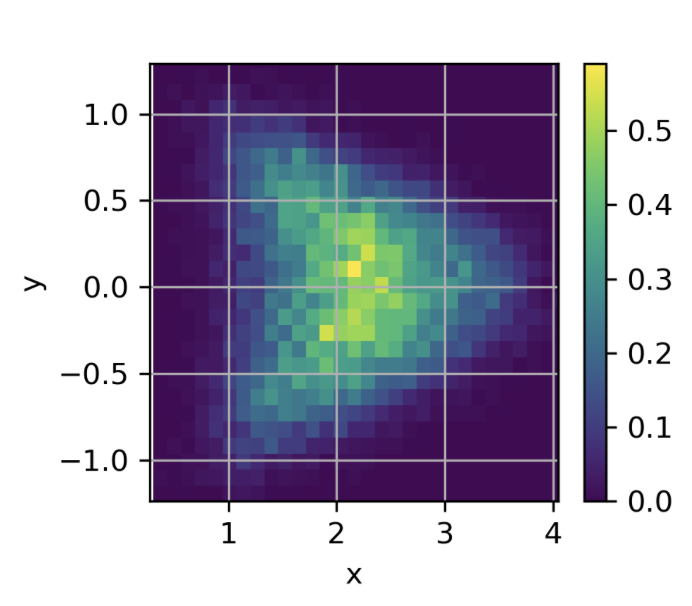}   
    \end{subfigure}
    \begin{subfigure}{0.31\textwidth}
         \includegraphics[width = \linewidth, height = 4cm]{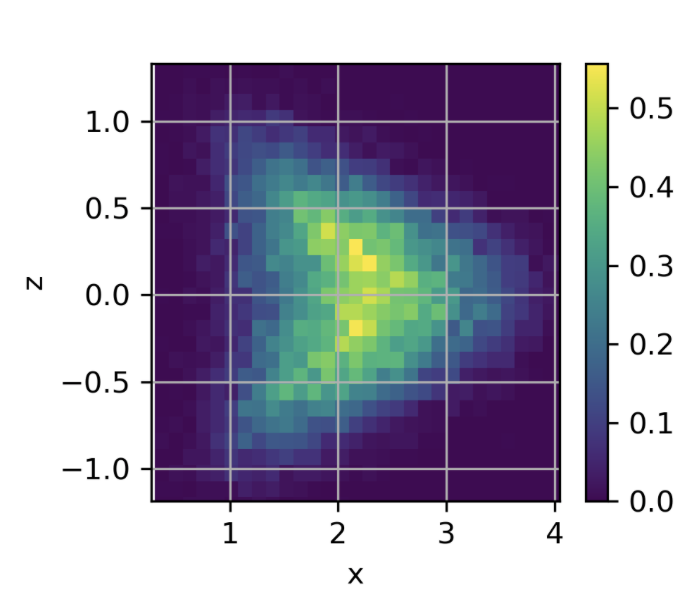}    
    \end{subfigure}
    \begin{subfigure}{0.31\textwidth}
         \includegraphics[width = \linewidth, height = 4cm]{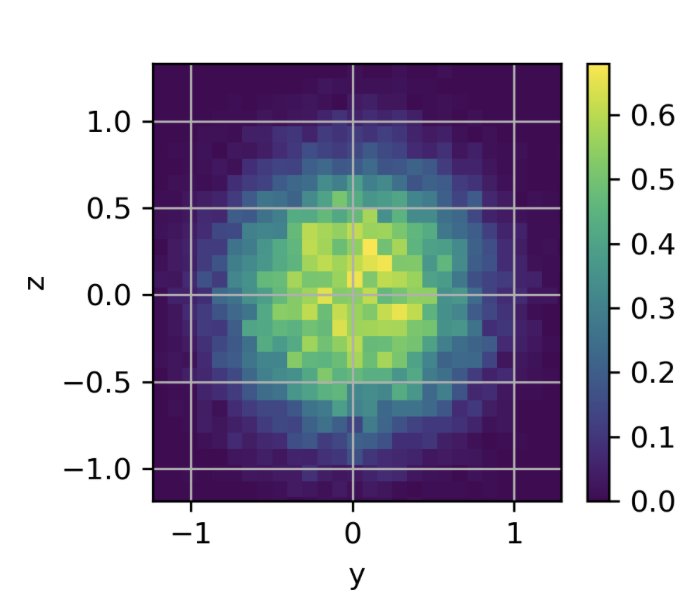}   
    \end{subfigure}
    \caption{Cross sectional views of  reference in 3D Laminar flow at $\sigma = 150$.}
    \label{Ref3DLaminar150}
\end{figure}

\begin{figure}[H]
    \centering
    \begin{subfigure}{0.31\textwidth}
         \includegraphics[width = \linewidth, height = 4cm]{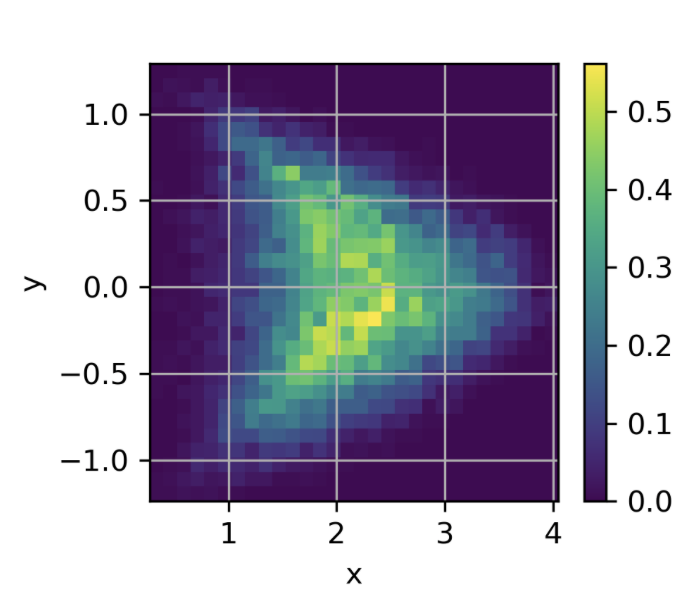}   
    \end{subfigure}
    \begin{subfigure}{0.31\textwidth}
         \includegraphics[width = \linewidth, height = 4cm]{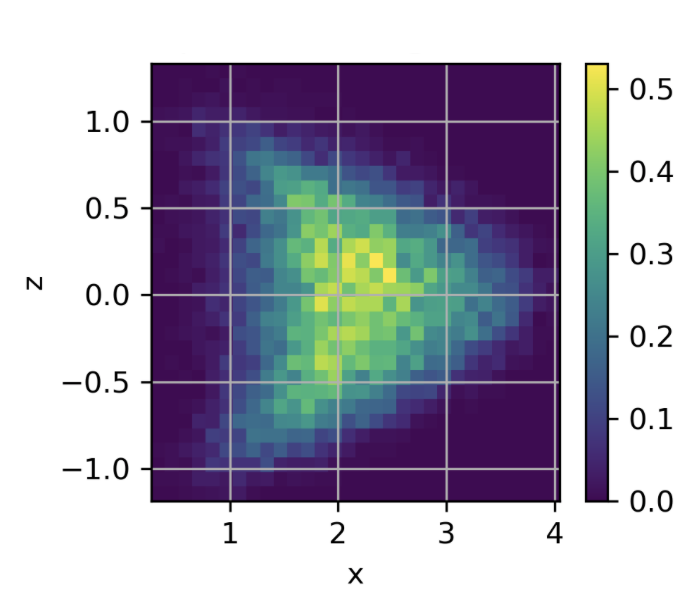}    
    \end{subfigure}
    \begin{subfigure}{0.31\textwidth}
         \includegraphics[width = \linewidth, height = 4cm]{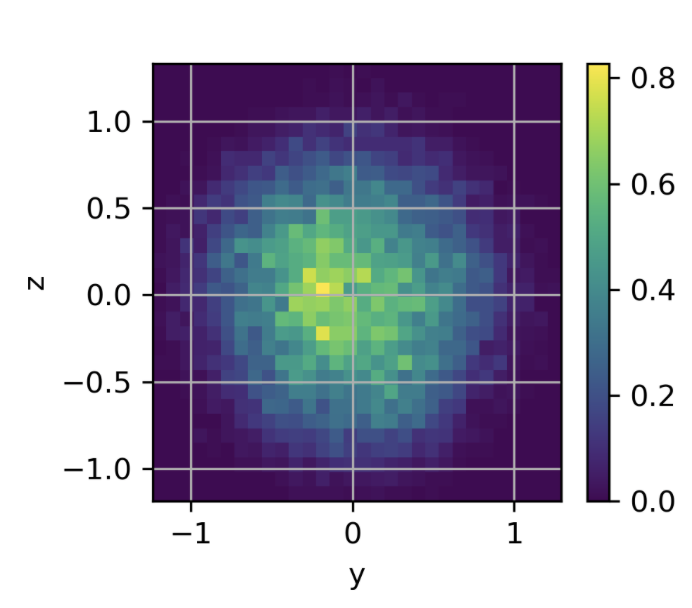}   
    \end{subfigure}
    \caption{Cross sectional views of DP method in 3D Laminar flow at $\sigma = 150$.}
    \label{DP3DLaminar150}
\end{figure}

\begin{figure}[H]
    \centering
    \begin{subfigure}{0.31\textwidth}
         \includegraphics[width = \linewidth, height = 4cm]{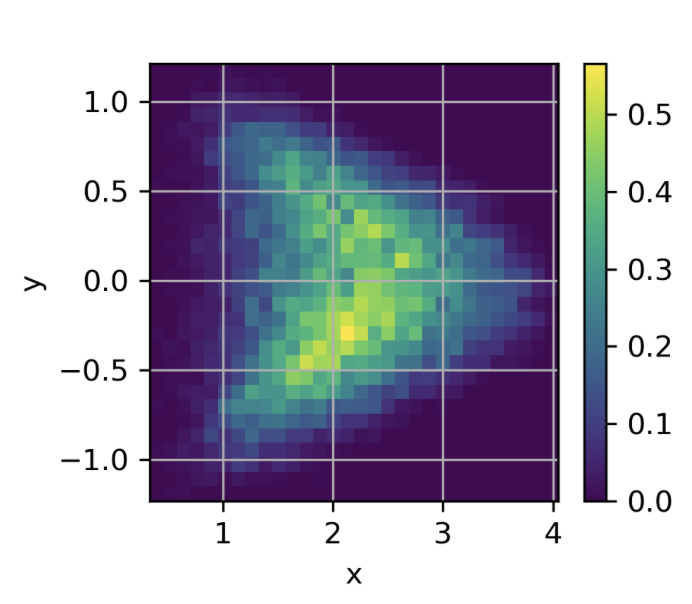}   
    \end{subfigure}
    \begin{subfigure}{0.31\textwidth}
         \includegraphics[width = \linewidth, height = 4cm]{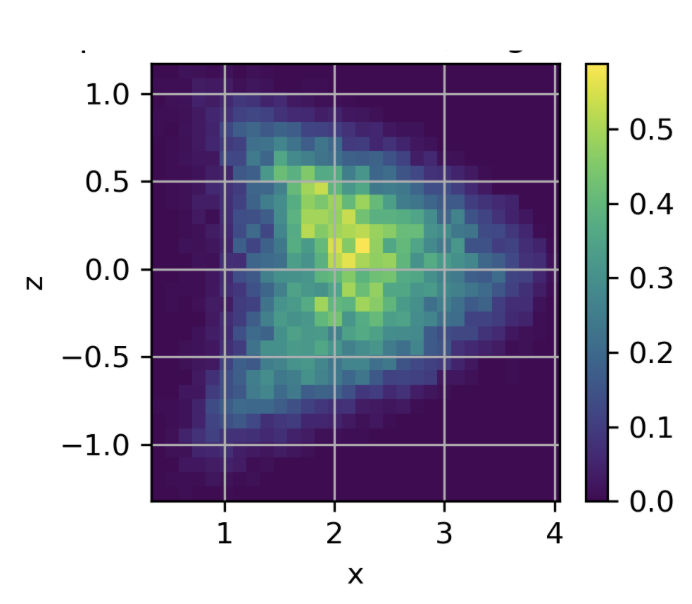}    
    \end{subfigure}
    \begin{subfigure}{0.31\textwidth}
         \includegraphics[width = \linewidth, height = 4cm]{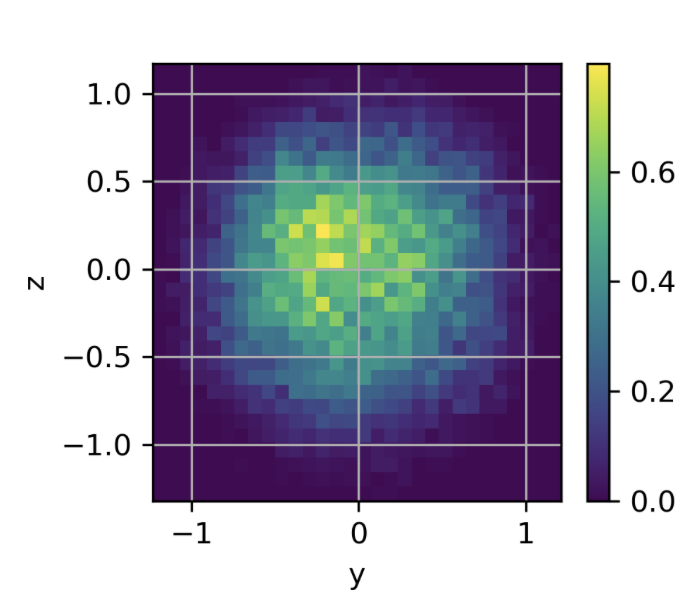}   
    \end{subfigure}
    \caption{Cross sectional views of BDP method in 3D Laminar flow at $\sigma = 150$.}
    \label{DPBir3DLaminar150}
\end{figure}

\begin{figure}[H]
    \centering    
    \begin{subfigure}{0.31\textwidth}
         \includegraphics[width = \linewidth, height = 4cm]{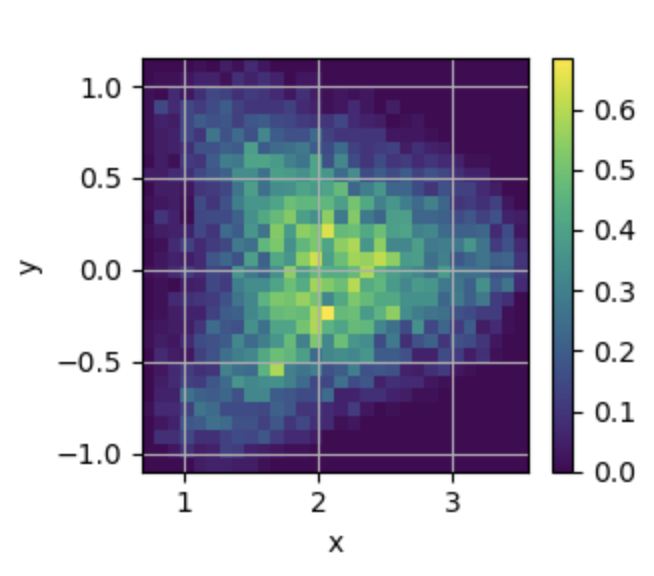}   
    \end{subfigure}
    \begin{subfigure}{0.31\textwidth}
         \includegraphics[width = \linewidth, height = 4cm]{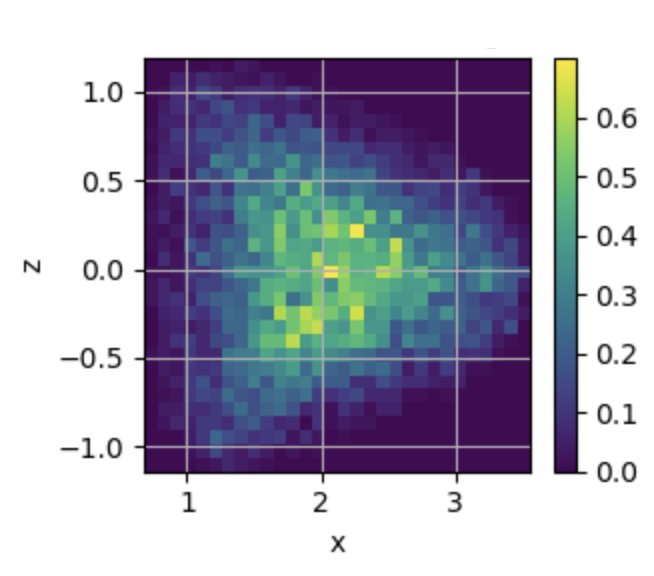}    
    \end{subfigure}
    \begin{subfigure}{0.31\textwidth}
         \includegraphics[width = \linewidth, height = 4cm]{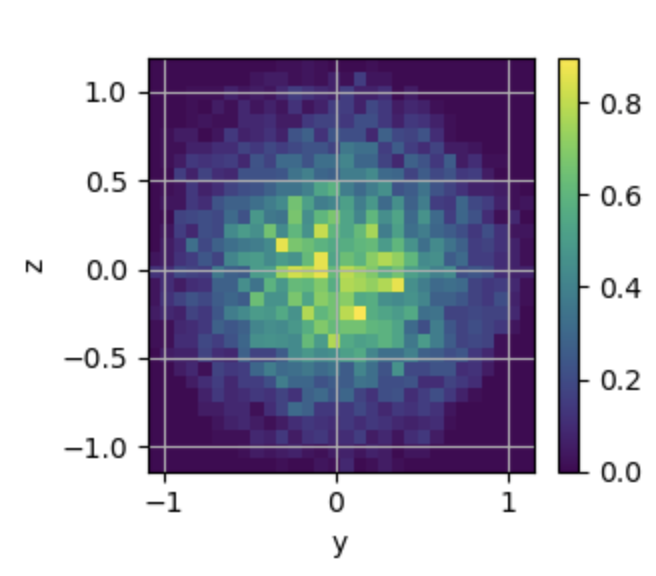}   
    \end{subfigure}
    \caption{Cross sectional views of Rectified flow in 3D Laminar flow at $\sigma = 150$.}
    \label{DMRect3DLaminar150}
\end{figure}

\begin{figure}[H]
    \centering
    \begin{subfigure}{0.31\textwidth}
         \includegraphics[width = \linewidth, height = 4cm]{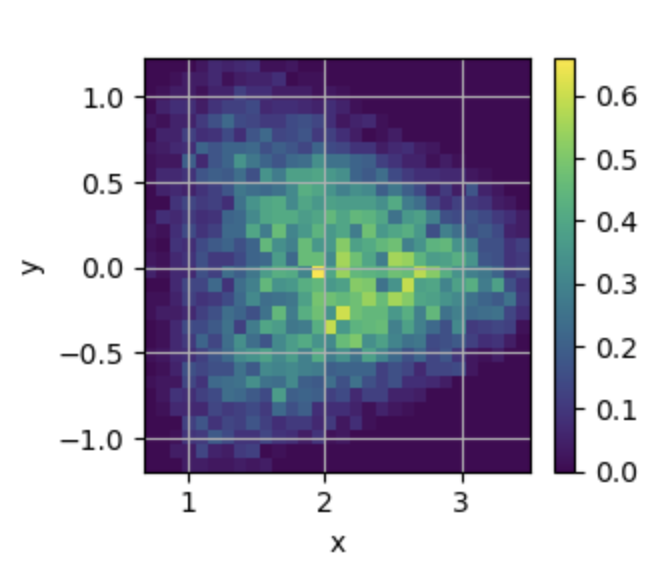}   
    \end{subfigure}
    \begin{subfigure}{0.31\textwidth}
         \includegraphics[width = \linewidth, height = 4cm]{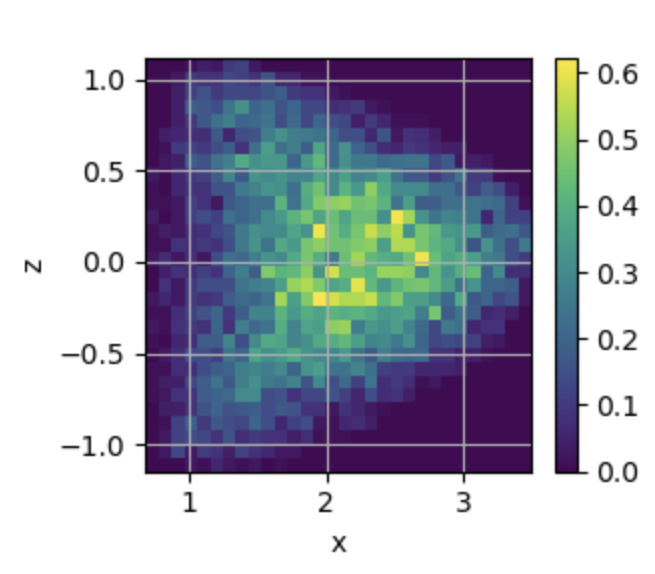}    
    \end{subfigure}
    \begin{subfigure}{0.31\textwidth}
         \includegraphics[width = \linewidth, height = 4cm]{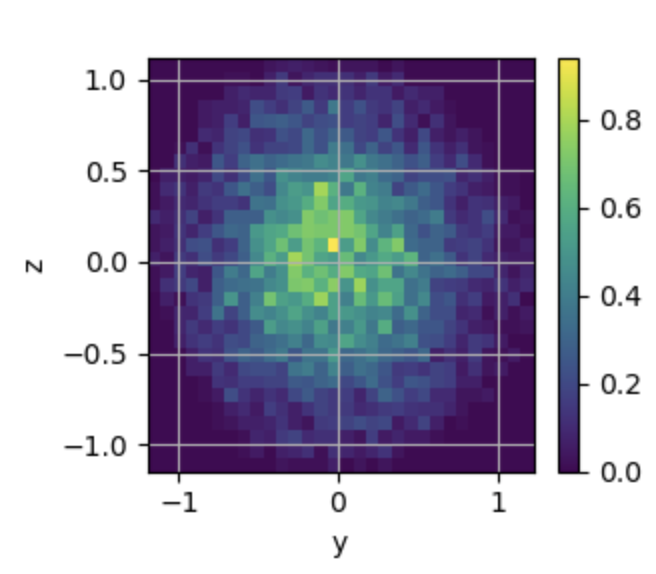}   
    \end{subfigure}
    \caption{Cross sectional views of Shortcut model in 3D Laminar flow at $\sigma = 150$.}
    \label{DMShort3DLaminar150}
\end{figure}

Figure \ref{Ref3DLaminar150} to \ref{DMShort3DLaminar150} show the distributions generated by different models and the reference in the case of $\sigma = 150$ in 3D laminar flow. It can be seen that other single-step models do not learn the tail well compared to DP models.

\subsection{KS in 3D Kolmogorov flow}
We consider KS in 3D Kolmogorov flow to generate chemotaxis patterns while the organisms travel and aggregate in chaotic streamlines. The fluid velocity field is:
\begin{equation}
    \bm v (x, y, z) = \sigma \cdot \Big(\sin(2\pi z),\ \sin(2\pi x),\ \sin(2\pi y) \Big)^T.
\end{equation}

The rest of the experimental setup is the same as in the previous subsection on 3D laminar flow. We similarly compare the performance of different models at small network parameter sizes.

\begin{table}[h]
    \centering
    \caption{Model Performance in 3D Kolmogorov flow}
    \begin{tabular}{cccccc}
        \toprule
        \multirow{3}{*}{Sigma $\setminus$ Model} & 
        & \multicolumn{3}{c}{$W_2$ distance (network parameter size $3k$)} \\
        \cline{2-6}
          & DM & DM & DM & DP & DP\\
        \cline{2-6}
           &   & Rectified flow & Shortcut  & & Bi-direction\\
        
        \midrule
        $\sigma = 10^{(\circ)}$    & 0.0184 & 0.0219 & 0.0221 & 0.0113 &  \textbf{0.0066}       \\
        $\sigma = 30^{(\blacktriangle)}$     & 0.0140 & 0.0196 & 0.0182 & 0.0064 &  \textbf{0.0058}    \\
        $\sigma = 50^{(\blacktriangle)}$     & 0.0297 & 0.0417 & 0.0343 & \textbf{0.0060} &  0.0127    \\
        $\sigma = 80^{(\blacktriangle)}$     & 0.0311 & 0.0471 & 0.0397 & 0.0167 &  \textbf{0.0153}    \\
        $\sigma = 100^{(\blacktriangle)}$    & 0.0362 & 0.0537 & 0.0404 & 0.0178 &  \textbf{0.0147}   \\
        $\sigma = 120^{(\blacktriangle)}$   & 0.0402 & 0.0569 & 0.0489 & 0.0241 &  \textbf{0.0237}    \\
        $\sigma = 150^{(\circ)}$   & 0.0543 & 0.0696 & 0.0628 & 0.0494 &  \textbf{0.0318}   \\
        $\sigma = 200^{(\bullet)}$    & 0.9556 & 1.1124 & 1.0376 & \textbf{0.6841} &  0.8042   \\
        \bottomrule
    \end{tabular}
    \label{KoloTable}
\end{table}

        

Table \ref{KoloTable} shows the performance of different models in comparison. 
Notations remain the same as in table \ref{laminarTable}. It is seen that DP and BDP methods perform much better than diffusion model (DM) and its recent one-step adaptations. In Figure \ref{Ref3DKolo100} to \ref{DMShort3DKolo100}, we plot the distribution generated by DP models and single-step DM models. By comparing with the reference generated by the IPM, we observe that the distributions learned by DP models come much closer to the target.

\begin{figure}[H]
    \centering
    \begin{subfigure}{0.31\textwidth}
         \includegraphics[width = \linewidth, height = 4cm]{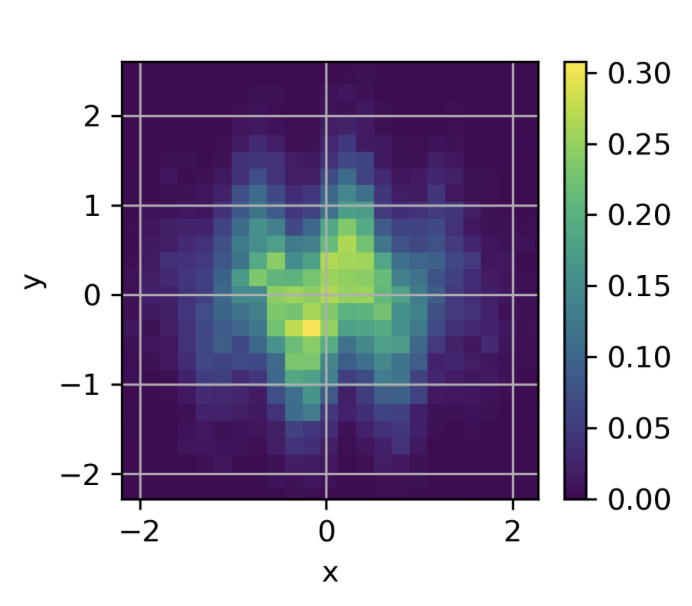}   
    \end{subfigure}
    \begin{subfigure}{0.31\textwidth}
         \includegraphics[width = \linewidth, height = 4cm]{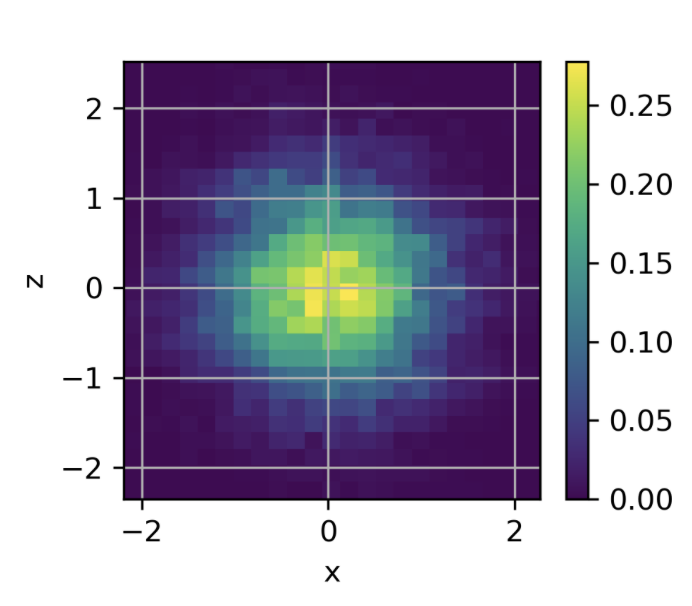}    
    \end{subfigure}
    \begin{subfigure}{0.31\textwidth}
         \includegraphics[width = \linewidth, height = 4cm]{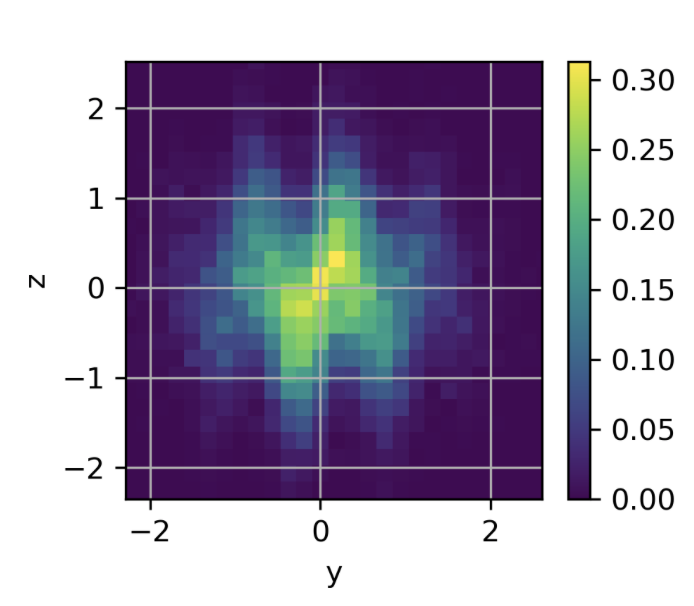}   
    \end{subfigure}
    \caption{Cross sectional views of  reference in 3D Kolmogorov flow at $\sigma = 100$.}
    \label{Ref3DKolo100}
\end{figure}

\begin{figure}[H]
    \centering
    \begin{subfigure}{0.31\textwidth}
         \includegraphics[width = \linewidth, height = 4cm]{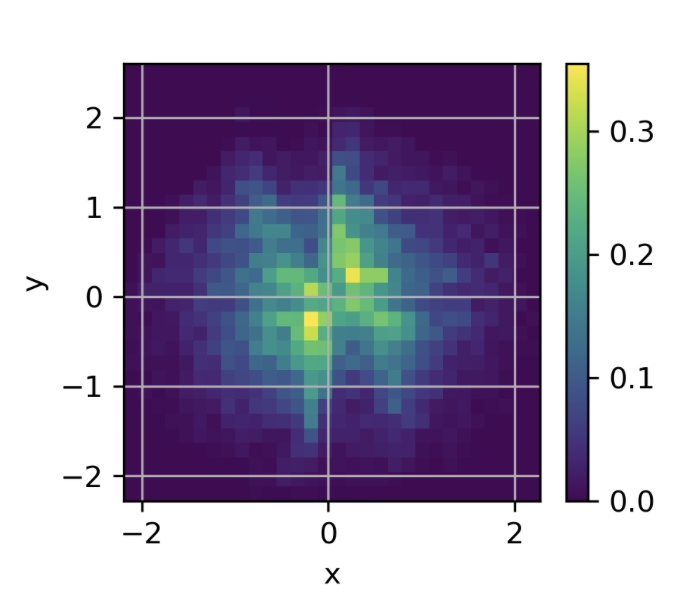}   
    \end{subfigure}
    \begin{subfigure}{0.31\textwidth}
         \includegraphics[width = \linewidth, height = 4cm]{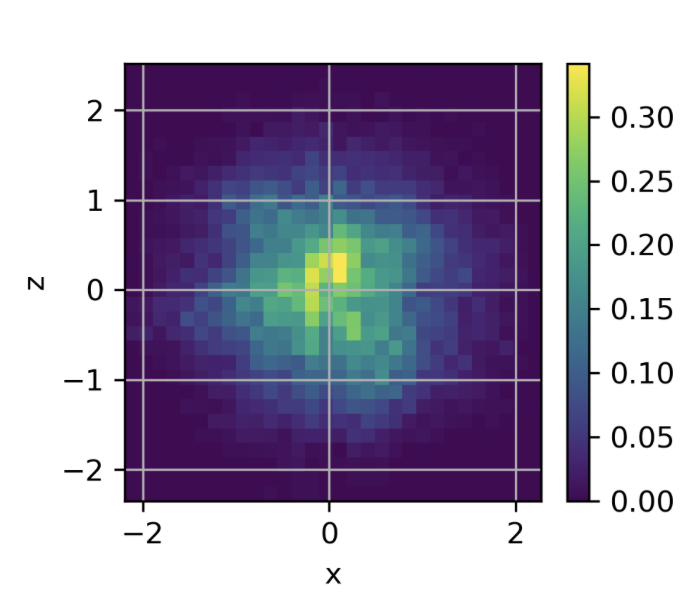}   
    \end{subfigure}
    \begin{subfigure}{0.31\textwidth}
         \includegraphics[width = \linewidth, height = 4cm]{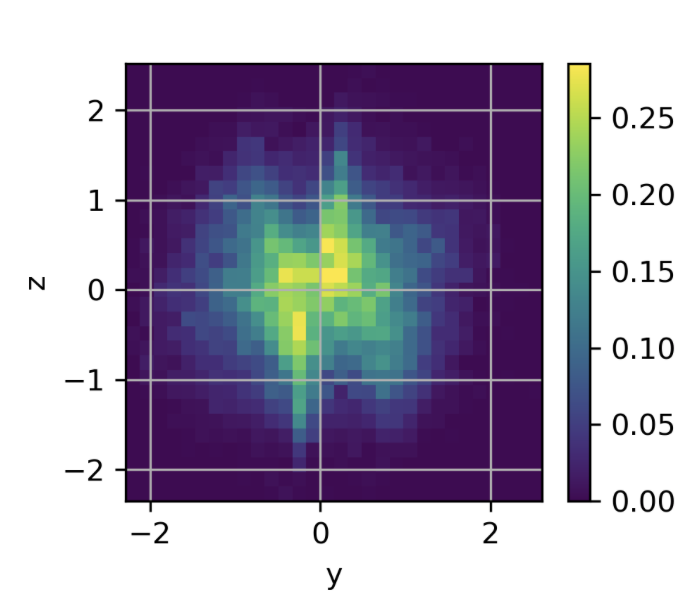}   
    \end{subfigure}
    \caption{Cross sectional views of  DP method in 3D Kolmogorov flow at $\sigma = 100$.}
    \label{DP3DKolo100}
\end{figure}

\begin{figure}[H]
    \centering
    \begin{subfigure}{0.31\textwidth}
         \includegraphics[width = \linewidth, height = 4cm]{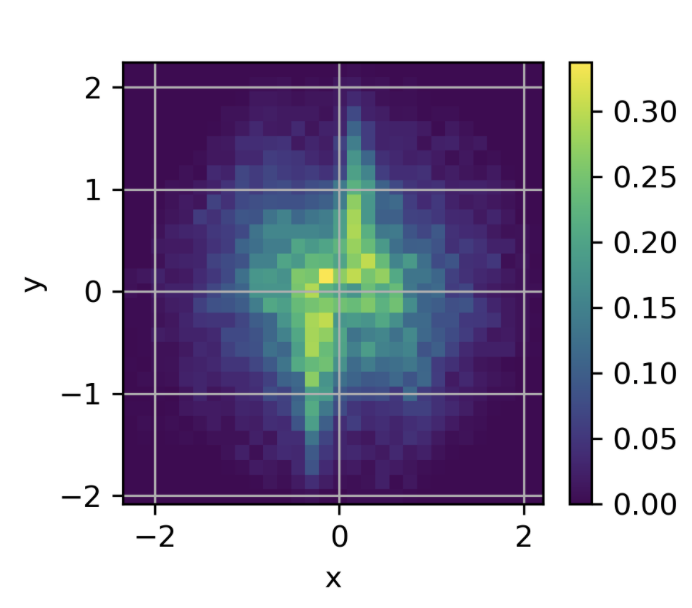}   
    \end{subfigure}
    \begin{subfigure}{0.31\textwidth}
         \includegraphics[width = \linewidth, height = 4cm]{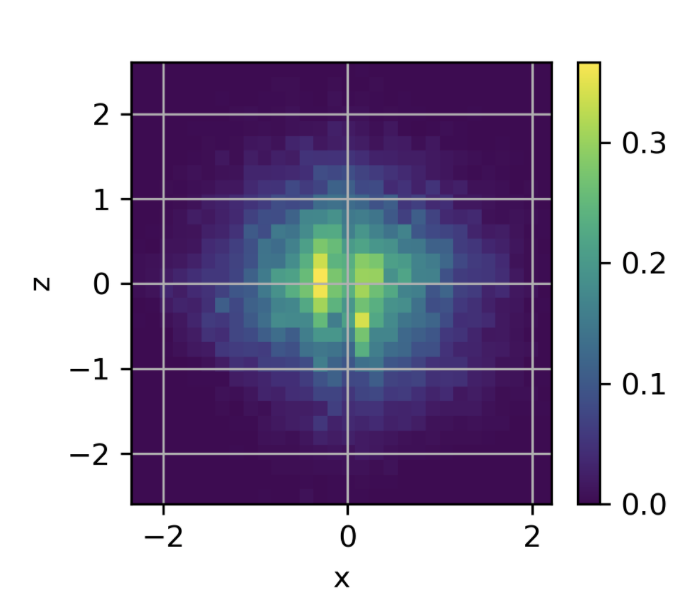}   
    \end{subfigure}
    \begin{subfigure}{0.31\textwidth}
         \includegraphics[width = \linewidth, height = 4cm]{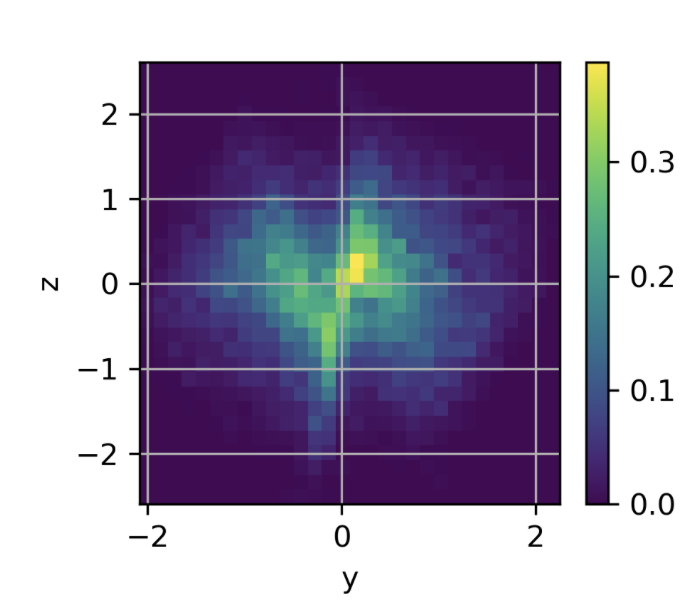}   
    \end{subfigure}
    \caption{Cross sectional views of  BDP method in 3D Kolmogorov flow at $\sigma = 100$.}
    \label{DPBir3DKolo100}
\end{figure}

\begin{figure}[H]
    \centering
    \begin{subfigure}{0.31\textwidth}
         \includegraphics[width = \linewidth, height = 4cm]{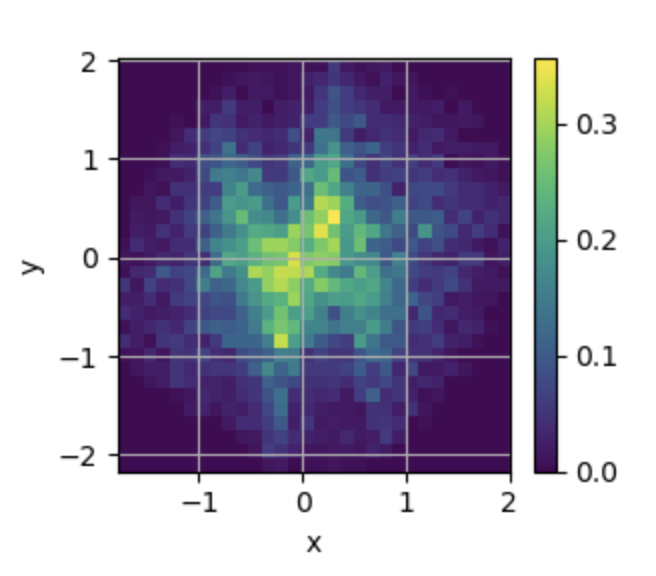}   
    \end{subfigure}
    \begin{subfigure}{0.31\textwidth}
         \includegraphics[width = \linewidth, height = 4cm]{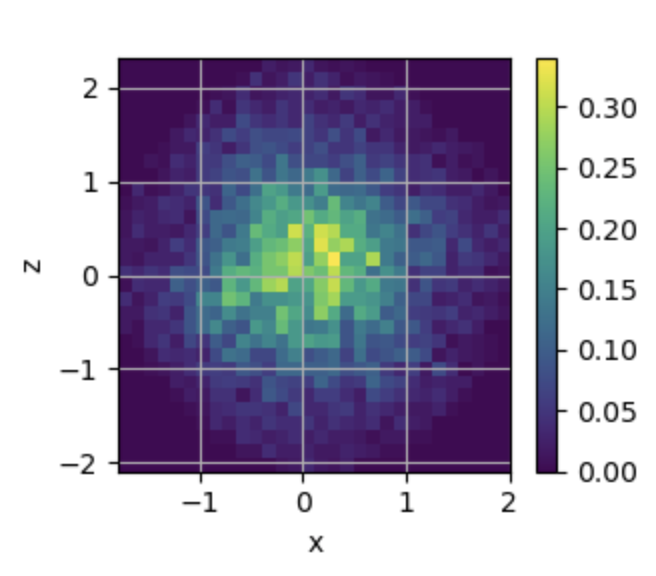}   
    \end{subfigure}
    \begin{subfigure}{0.31\textwidth}
         \includegraphics[width = \linewidth, height = 4cm]{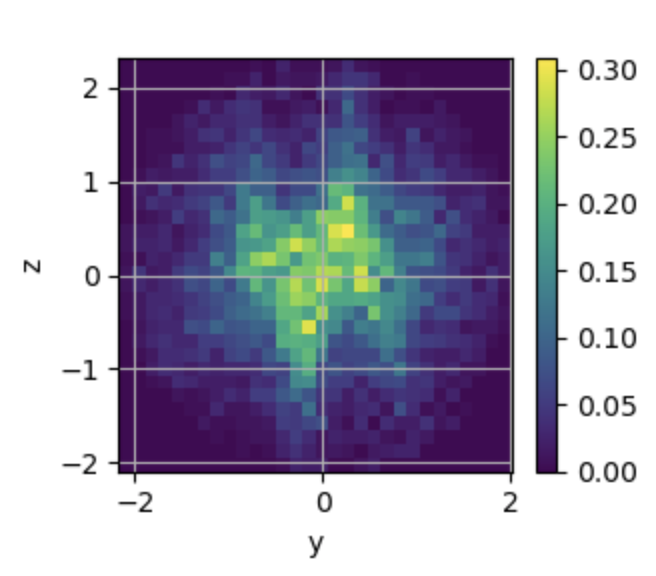}   
    \end{subfigure}
    \caption{Cross sectional views of  Rectified flow in 3D Kolmogorov flow at $\sigma = 100$.}
    \label{DMRect3DKolo100}
\end{figure}

\begin{figure}[H]
    \centering
    \begin{subfigure}{0.31\textwidth}
         \includegraphics[width = \linewidth, height = 4cm]{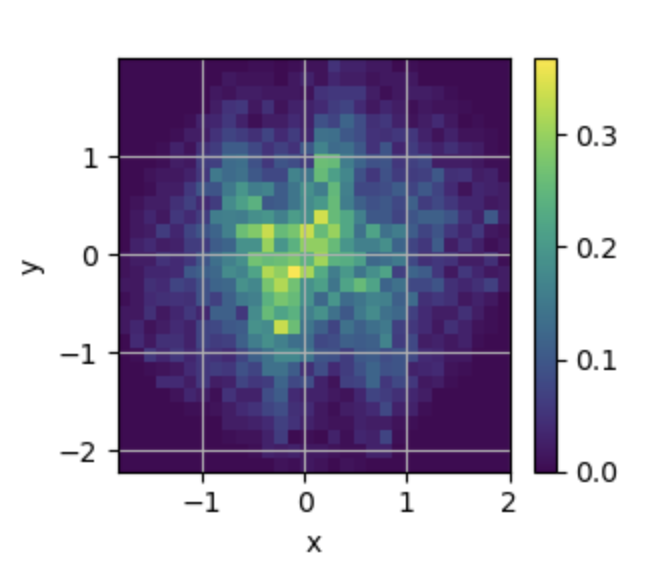}   
    \end{subfigure}
    \begin{subfigure}{0.31\textwidth}
         \includegraphics[width = \linewidth, height = 4cm]{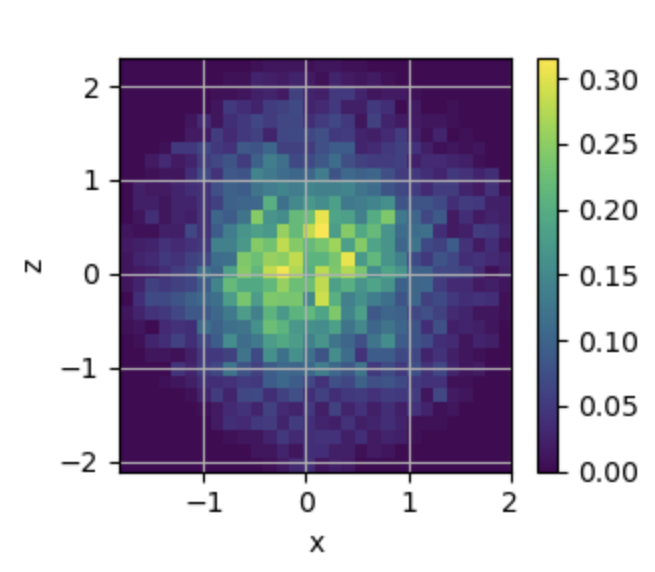}   
    \end{subfigure}
    \begin{subfigure}{0.31\textwidth}
         \includegraphics[width = \linewidth, height = 4cm]{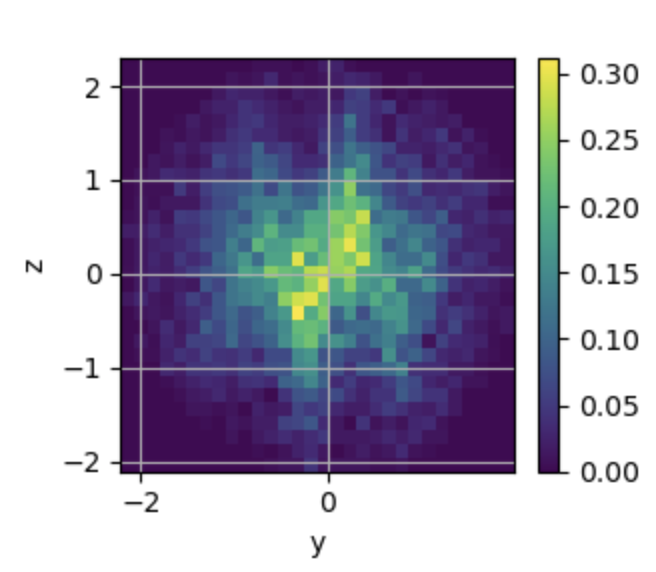}   
    \end{subfigure}
    \caption{Cross sectional views of  Shortcut model in 3D Kolmogorov flow at $\sigma = 100$.}
    \label{DMShort3DKolo100}
\end{figure}

\subsection{Mixtures of Gaussian}
In this subsection, we compare the performance of BDP and single-step diffusion (flow matching) models in the case of Gaussian mixtures in different space dimensions. For a mixture of $m$ Gaussian distributions on $\mathbb{R}^n$, the target distribution is in closed form. Let the probability density function of a multivariate Gaussian distribution with mean $\bm \mu$ and covariance matrix $\bm \Sigma$ be:
\begin{equation}
    p(\bm x; \bm \mu, \bm \Sigma) = \frac{1}{(2\pi)^{n/2} |\bm \Sigma|^{1/2}} \exp (-\frac{1}{2} (\bm x - \bm  \mu)^T \bm \Sigma^{-1} (\bm x - \bm \mu)).
\end{equation}
Then the close form of the Gaussian mixture can be represented by 
\begin{equation}
    p_{GM}(\bm x; \{\bm \mu_i, \bm \Sigma_i\}_{i=1}^m  = \sum_{i=1}^m w_i\cdot p(\bm x; \bm \mu_i, \bm \Sigma_i),\ with\ \sum_{i=1}^m w_i = 1. 
\end{equation}
Here $w_i > 0$ is the weight of the $i$-th Gaussian distribution. In the numerical experiments, we choose $m = 2$ and set $w_i = 0.5, i = 1,2$.  The following table shows the performance of DP models, diffusion and two single-step diffusion models vs. dimensions. The \textit{2-Wasserstein} distance is computed between the first two dimensions of the network output and the reference.

\begin{table}[h]
    \centering
    \caption{Model performance in mixture of two Gaussians on $\mathbb{R}^{n}$, bold/blue is best among all (one-step) methods. }
    \begin{tabular}{cccccc}
        \toprule
        \multirow{3}{*}{$n\ \setminus$ Model} & 
        & \multicolumn{3}{c}{$W_2$ distance (parameter size $3k$)} \\
        \cline{2-6}
           & DM & DM & DM & DP & DP\\
        \cline{2-6}
           &   & Rectified flow & Shortcut  & & Bi-direction\\
        
        \midrule
        $n = 3$  &  0.0633 & 0.0812 & 0.0654 & 0.0635 & \textbf{0.0534}  \\
        $n = 4$    & \textbf{0.0603} & 0.0840 & \textcolor{blue}{0.0632} & 0.1141 &  0.1052    \\
        $n = 8$    & \textbf{0.0657} & 0.0826 & \textcolor{blue}{0.0688} & 0.1317 &  0.1207    \\
        $n = 16$   & \textbf{0.0700} & 0.0896 & \textcolor{blue}{0.0739} & 0.1598 &  0.1493    \\
        $n = 32$   & \textbf{0.1035} & 0.1369 & \textcolor{blue}{0.1124} & 0.3507 &  0.2926    \\
        
        \bottomrule
    \end{tabular}
    \label{GMMTable}
\end{table}

From Table \ref{GMMTable}, we observe that the two single-step models maintain performance better than DP as the dimension increases. 
This is due to the transition matrix computations in DP models which have $O(N^2)$ memory cost, $N$ the mini-batch size. However, DM and flow matching (FM) models only need $O(N)$ memory for the cost related to batch size. As the dimension increases, the DP-based models have more information to learn in each round of learning, but the $O(N^2)$ memory cost growth and the memory limitation of the machine itself will make it challenging to provide more data samples in one epoch. In comparison, DM-based models are much less sensitive in this regard. In lower dimensions, such as $2$-$3$ space dimensions, or when data volume is not large, the DP models can achieve better performance while remaining efficient, which is consistent with the above two examples of KS models.

\begin{figure}[h]
    \centering
    \begin{subfigure}{0.325\textwidth}
         \includegraphics[width = \linewidth, height = 3.97cm]{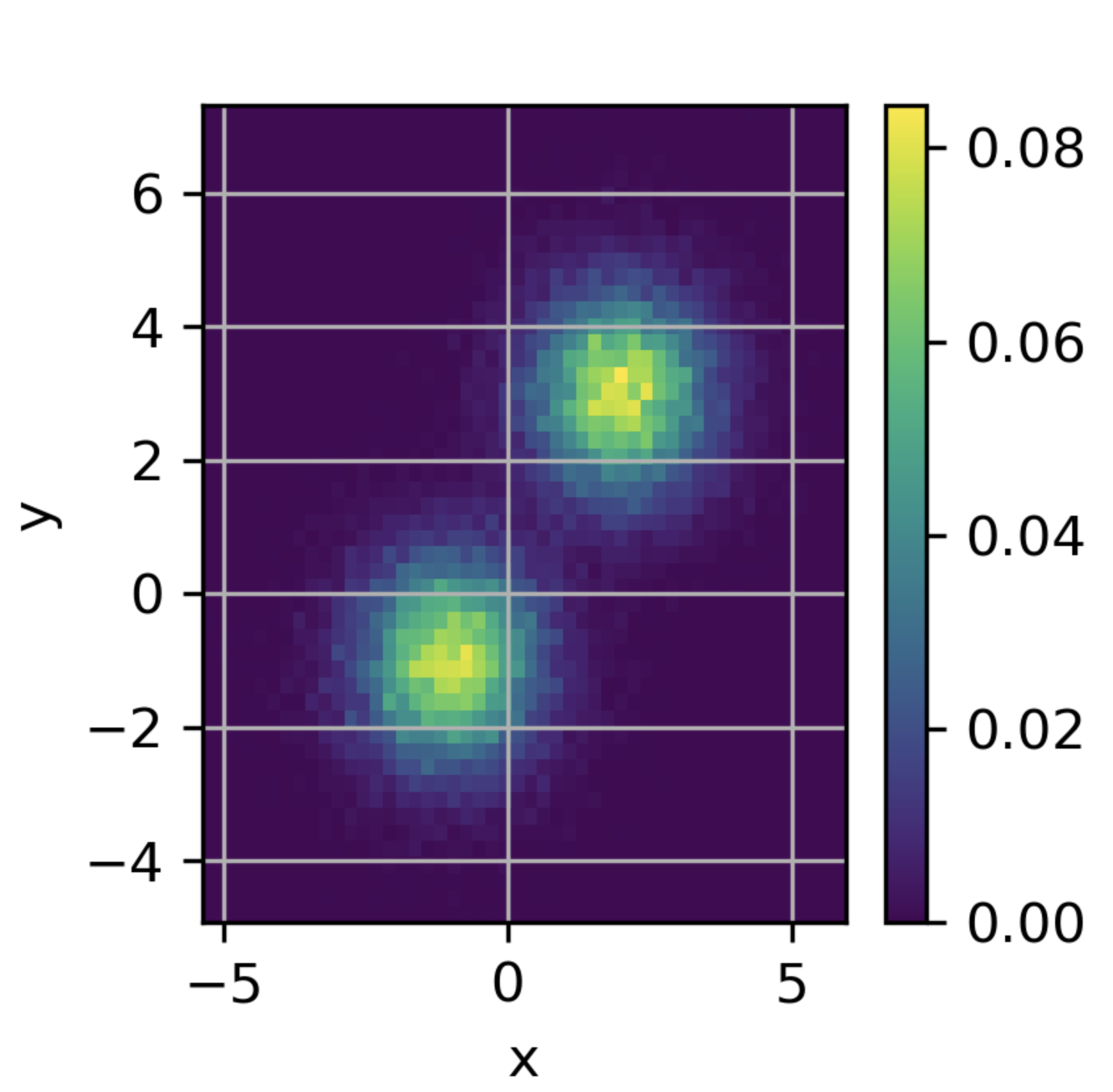}
         \caption{Reference}
         \label{16DDMcutDiffstep1k}
    \end{subfigure}
    \begin{subfigure}{0.325\textwidth}
         \includegraphics[width = \linewidth, height = 4cm]{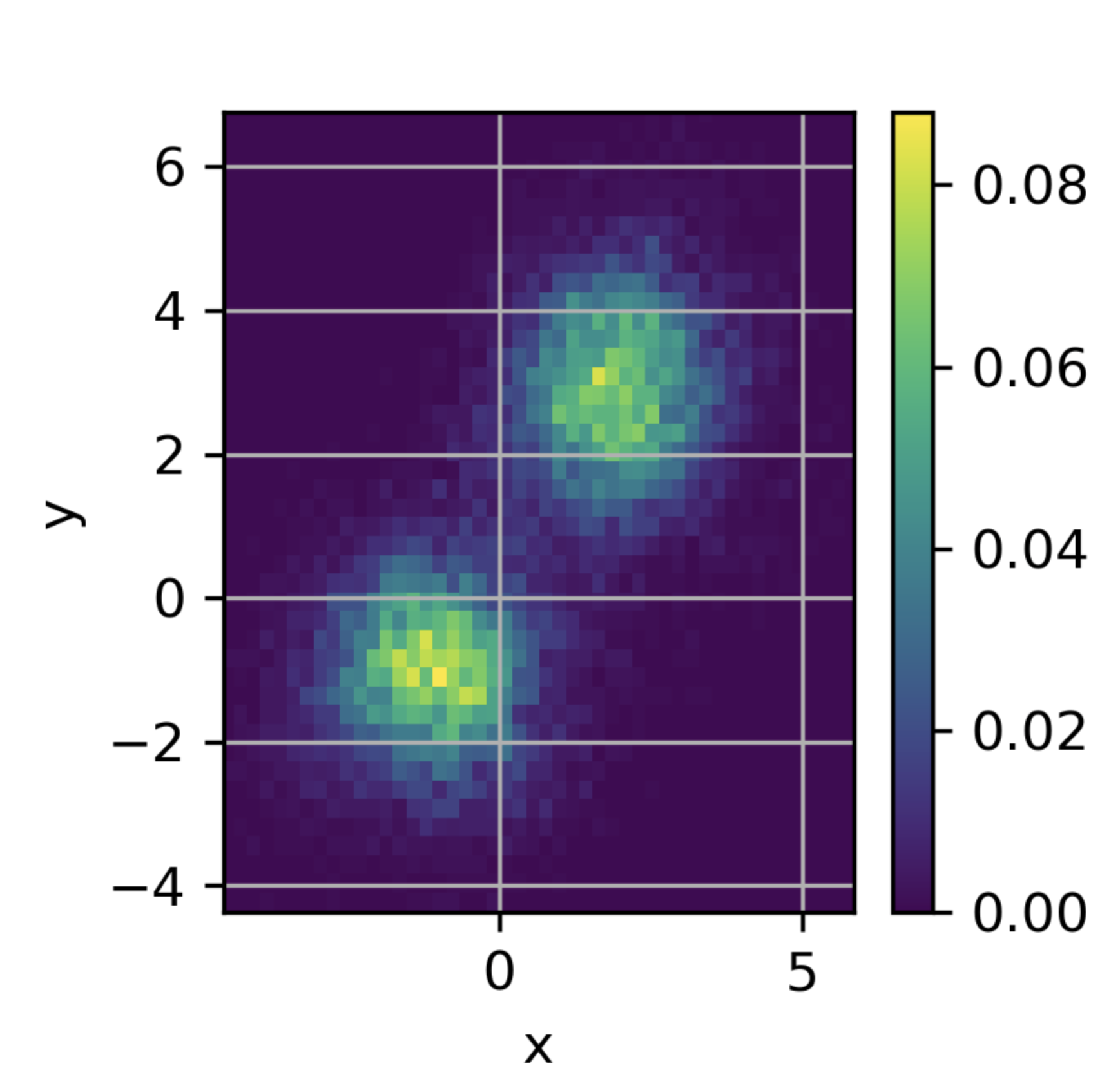}
         \caption{Deep Particle}
         \label{16DDMDP}
    \end{subfigure}
    \begin{subfigure}{0.325\textwidth}
         \includegraphics[width = \linewidth, height = 4cm]{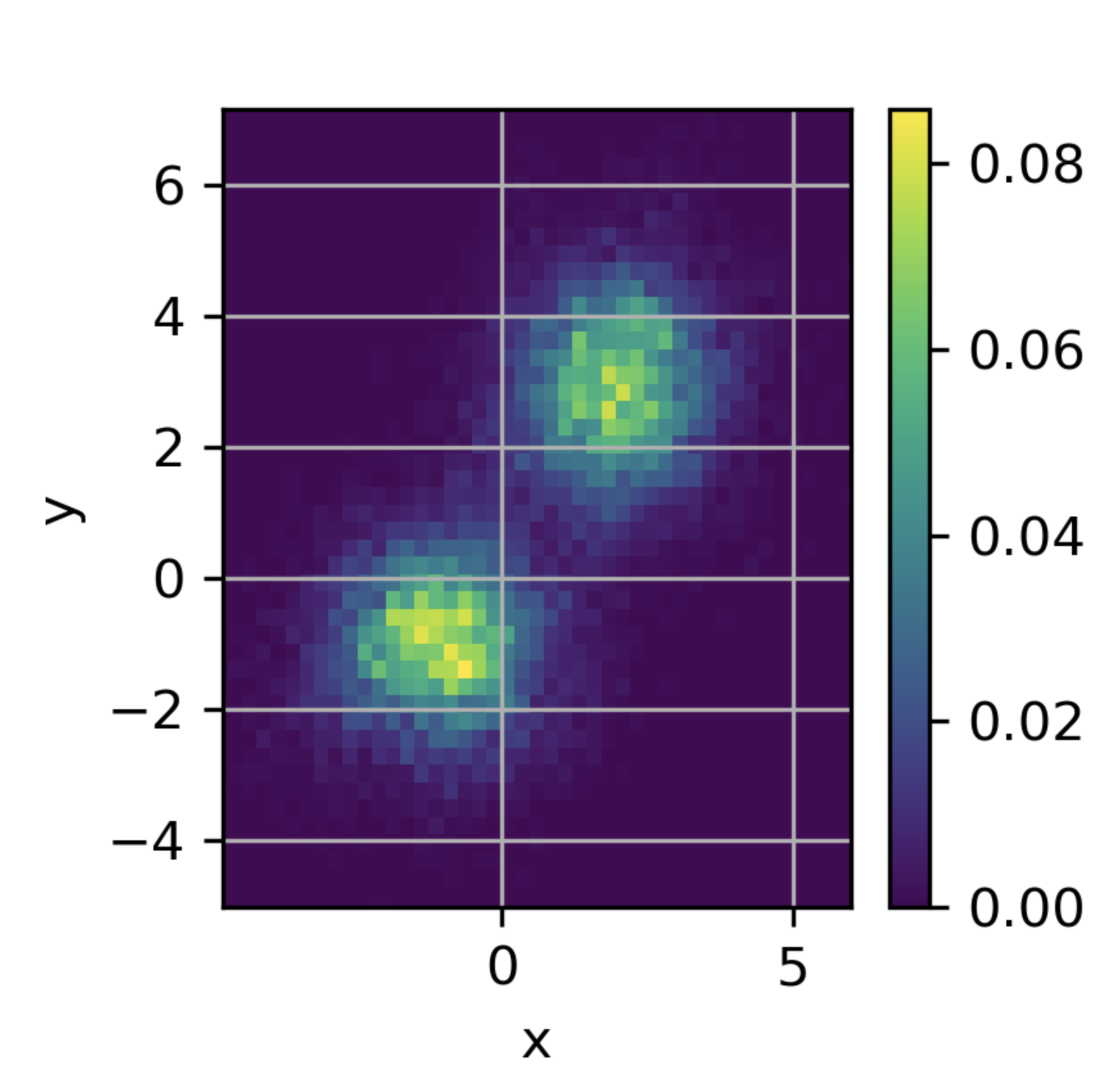}
         \caption{BDP Method}
         \label{16DDPBir}
    \end{subfigure}

    \begin{subfigure}{0.325\textwidth}
         \includegraphics[width = \linewidth, height = 3.97cm]{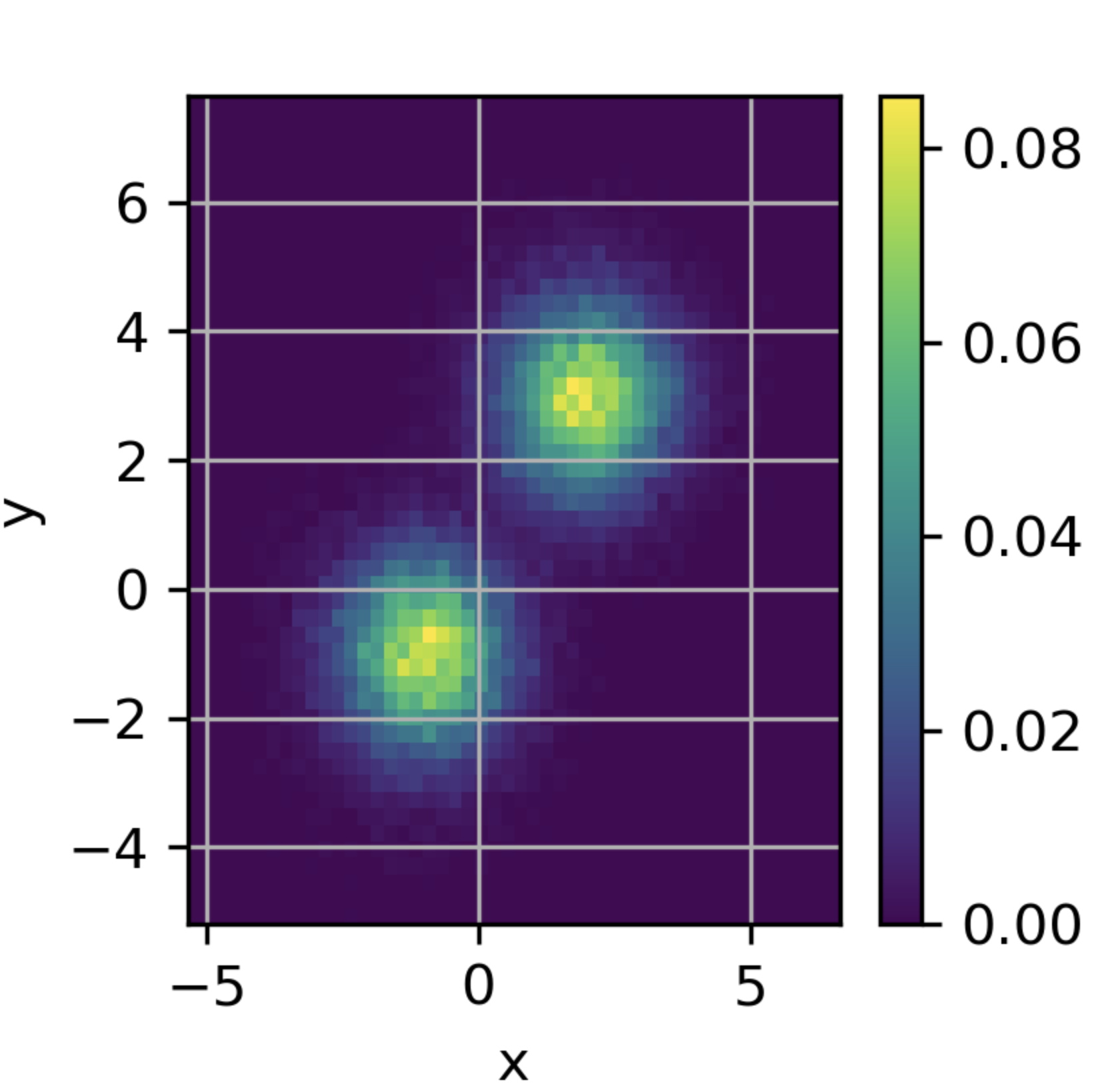}
         \caption{Diffusion Model with step $10^3$}
         \label{16DDMcutDiffstep100}
    \end{subfigure}
    \begin{subfigure}{0.325\textwidth}
         \includegraphics[width = \linewidth, height = 4cm]{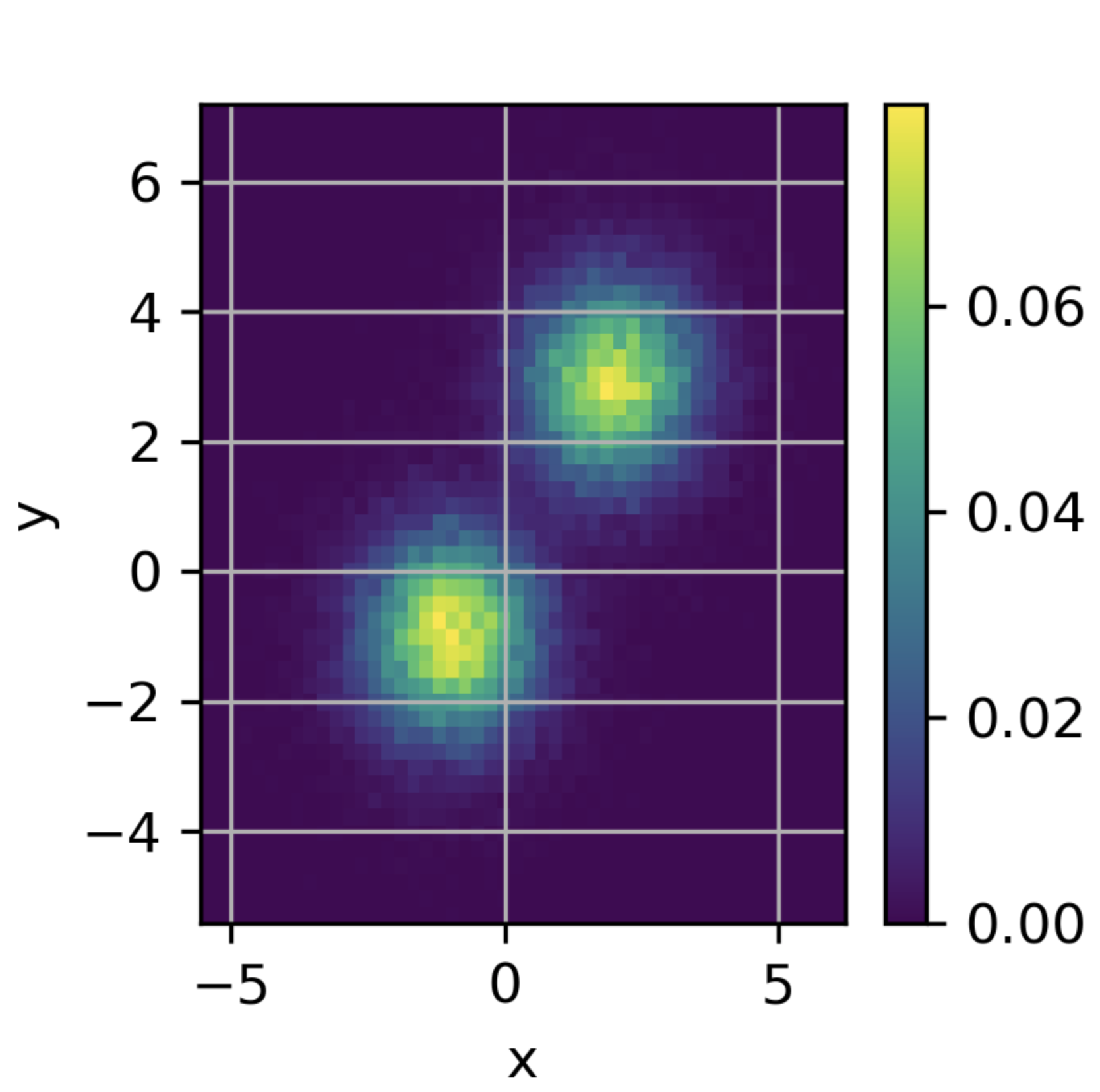}
         \caption{Rectified flow}
         \label{16DDMRectdt1}
    \end{subfigure}
    \begin{subfigure}{0.325\textwidth}
         \includegraphics[width = \linewidth, height = 4cm]{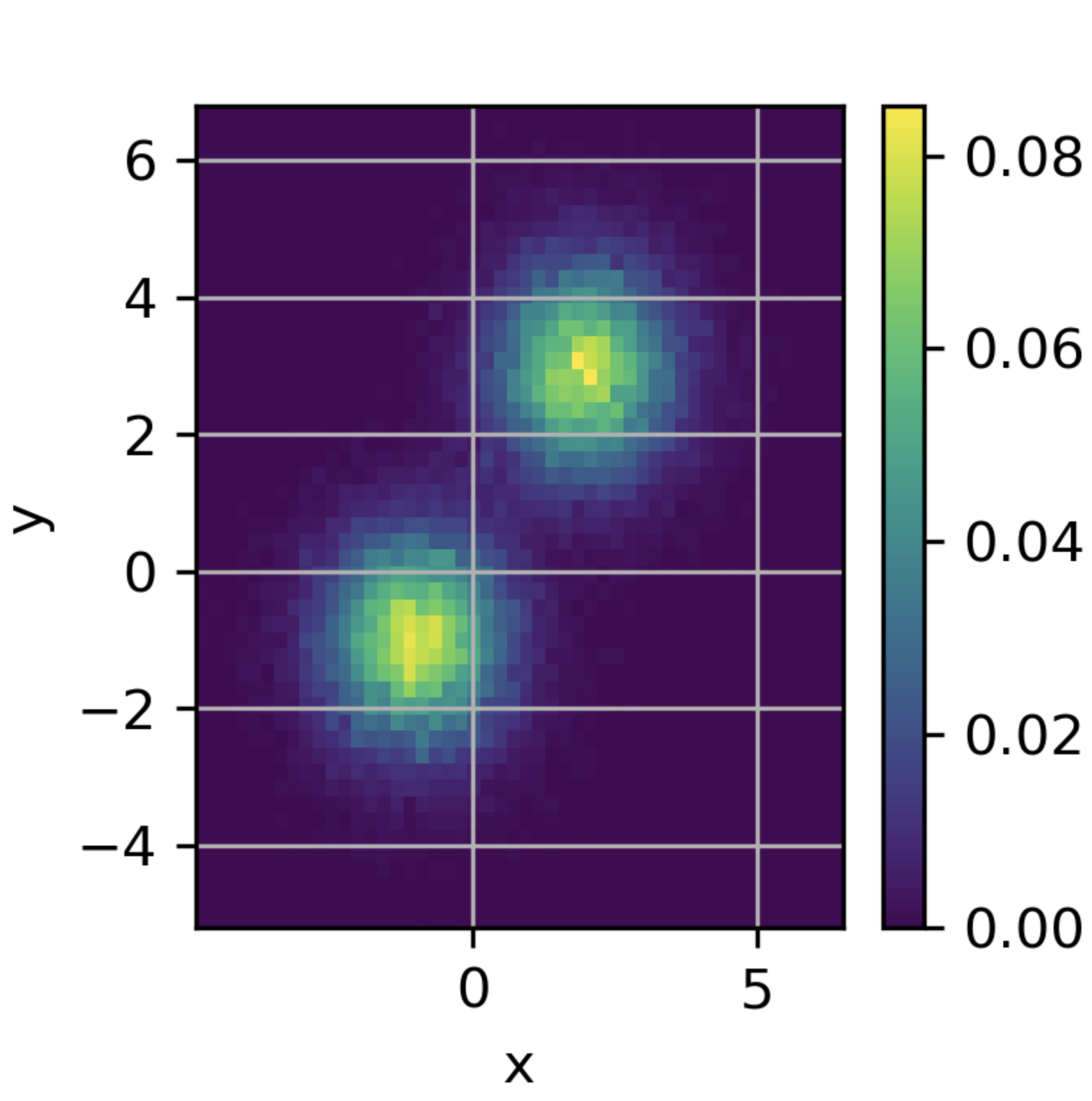}
         \caption{ShortCut Model}
         \label{16DDMShortCut}
    \end{subfigure}

    \caption{Generation projected on the $x$-$y$ plane by models vs. reference, targeting a $16 D$ Gaussian mixture.}
\end{figure}

\section{Conclusion}\label{sec:Conclusion}
\noindent We developed a deep learning approach, the BDP method, to efficiently solve a class of physically parameterized transport map problems in low (physical) dimensions. The approach is based on the computation of \textit{2-Wasserstein distance} and simultaneous learning of the forward and reverse mappings. The bidirectional architecture improves the stability of the learned mapping. We adopted the \textit{mini-batch} technique in the training process and analyzed the approximation error in terms of mini-batch size.
We then studied the performance of the BDP method and compared it with two single-step diffusion models on the Keller-Segel system in the presence of 3D laminar and Kolmogorov flows, as well as the mixture of Gaussians. The BDP method performs better in physical dimensions when the parameter size is moderate, while the one-step diffusion (flow matching) type generative models scale better in higher dimensions and large volumes of data. The inference speed of BDP remains the fastest in all dimensions observed.

In future work, we plan to further study the critical phenomenon (from DP to diffusion models) in terms of transition dimension and develop methods to improve the scalability of DP.

\section*{Acknowledgements}
\noindent ZW was partly supported by NTU SUG-023162-00001, MOE AcRF Tier 1 Grant RG17/24. JX was partly supported by NSF grants DMS-2309520, DMS-2219904, DMS-2151235.  ZZ was partially supported by the National Natural Science Foundation of China (Projects 92470103 and 12171406), the Hong Kong RGC grant (Projects 17307921 and 17304324), the Seed Funding Programme for Basic Research (HKU), the Outstanding Young Researcher Award of HKU (2020-21), and Seed Funding for Strategic Interdisciplinary Research Scheme 2021/22 (HKU). The computations were performed using research computing facilities provided by Information Technology Services, The University of Hong Kong.



\bibliographystyle{plain}
\bibliography{reference}

\end{document}